\newcommand{\nop}[1]{}
\newtheorem{definition}{Definition}
\newtheorem{example}{Example}
\newtheorem{theorem}{Theorem}
\newtheorem{corollary}{Corollary}
\newtheorem{lemma}{Lemma}
\newtheorem{remark}{Remark}
\renewcommand{\Pr}{\mathbb{P}}
\begin{document}

\title{Performance Analysis of Quantum Channels}
\author{
Fengyou Sun\\
Department of Information Security and Communication Technology\\ 
NTNU -- Norwegian University of Science and Technology\\ 
Trondheim, Norway\\
sunfengyou@gmail.com
}

\maketitle

\begin{abstract}
We study the quality of service in quantum channels.
We regard the quantum channel as a queueing system, and present queueing analysis of both the classical information transmission and quantum information transmission in the quantum channel.
For the former, we link the analysis to the classical queueing model, for the latter, we propose a new queueing model and investigate the limit queueing behavior.
For both scenarios, we obtain tail distributions of the performance measures, i.e., backlog, delay, and throughput.
\end{abstract}

\begin{IEEEkeywords}
Quantum channel capacity, queueing analysis, backlog, delay, throughput.
\end{IEEEkeywords}

\IEEEpeerreviewmaketitle

\section{Introduction}

The era of quantum technology is coming accompanying the second quantum revolution \cite{dowling2003quantum}\cite{de2016quantum} and the future of quantum technology lies in quantum networking \cite{jason2017quantum}\cite{kimble2008quantum}\cite{castelvecchi2018entangled}.
Technically, the development of quantum internet requires a hybrid of technologies, which combine the features of both discrete variable systems and continuous variable systems \cite{pirandola2016unite}.
Theoretically, it is necessary to build a system theory for the dimension of the network dynamics, i.e., backlog, delay, and throughput, to deal with the diverse quality-of-service requirements of the network applications and to help deploy the quantum network.

In this paper, we consider three questions raised by quantum channel performance analysis and aim to provide a mathematical tool to facilitate the quantum system analysis and design.
\begin{enumerate}
\item
What is the operational utility of the quantum channel capacity?

A quantum channel can represent any physical operation that reflects the state evolution of a quantum system, ranging from an optical fiber and a free-space link to a computer memory \cite{gyongyosi2018survey}. 
There are many different kinds of quantum channel capacity depending on the involved purpose, protocol, and resource \cite{bennett2014quantum}, and these quantum channel capacity concepts are generally non-additive \cite{smith2010quantum}.
However, the information transmission follows the causal property of nature and the amount of transmitted information is additive, in addition, the performance analysis requires a study of the cumulative process the capacity process.
Therefore, we propose a new concept of quantum channel capacity, namely cumulative capacity, which is a sum of the quantum channel capacity over a period of time.
The cumulative capacity satisfies the additive-and-causal requirement in the operational domain and is explicitly used in performance analysis.

\item
What is the uniqueness of quantum channel performance analysis?

Quantum communication has many characteristics distinguishing from the classical communication, e.g., the super-activation \cite{smith2008quantum} and the negative information \cite{horodecki2005partial}.
We differentiate between the performance analysis of classical information transmission and quantum information transmission.
For classical information transmission, we show that the classical queue model is able to describe the queueing behavior in the quantum channel.
For quantum information transmission, since the quantum channel is capable of not only transmitting information but also generating communication potential due to entanglement \cite{horodecki2005partial}, we propose a new queue model, which is able to describe the fluctuation of both the communication workload and the communication potential.

\item
What is the probabilistic characterization of the information quantity in quantum channel?

The quantum channel encompasses the classical regime and the quantum regime \cite{zurek2006decoherence}, where the information adheres respectively to the classical randomness and the quantum randomness \cite{zeilinger2000quantum}\cite{zeilinger2005message}. 
Contrast to the information content, we focus on the quantity of the classical information and quantum information, i.e., the storage space of the information in the operational sense \cite{horodecki2005partial}.
We treat the quantum channel capacity as a bridge from the quantum regime to the classical regime, i.e., it takes into account the quantum effect in the quantum regime and maps onto the information transmission amount in the classical regimes.
We use classical probability to describe the randomness and regularity of the information quantity, considering the dependence that is caused by the environment in the classical regime.
\end{enumerate}

The remainder of this paper is structured as follows. 
In Sec. \ref{quantum-channel}, we recapitulate the basic concepts of quantum channel capacity, review a few explicit expressions of the classical capacity and quantum capacity of practical channels, and introduce the cumulative capacity concept. 
In Sec. \ref{performance-analysis}, we present the queueing principles for both classical information and quantum information transmission, and obtain generic results for the performance measures towards a framework for performance analysis of quantum systems.
Finally, we conclude this paper and discuss some potential research directions in Sec. \ref{conclusion}.

\section{Quantum Channel}
\label{quantum-channel}

Consider a quantum system in a Hilbert space $\mathcal{H}=\mathcal{H}_Q$, the quantum states are given by the density operator $\rho$ on $\mathcal{H}$ \cite{holevo2001evaluating}.
In the Schr\"{o}dinger picture, a quantum channel is a transformation $\rho \mapsto \mathcal{N}(\rho)$, which is a completely positive trace preserving map on trace class operators.
In this sense, the quantum channel arises from a unitary interaction $U$ between the quantum system and the environment described by another Hilbert space $\mathcal{H}_E$ with initial state $\rho_E$, i.e.,
\begin{equation}
\mathcal{N}(\rho) = \Tr_{E} \qty[ U\rho \otimes \rho_{E} U^\dagger],
\end{equation}
where $\Tr_{E}$ denotes the partial trace with respect to $\mathcal{H}_E$.
Denote $H(x)=-\Tr x\log x$ as the von Neumann entropy of a density operator $x$.
For the input state $\rho$ and the output state $\mathcal{N}(\rho)$, we have three entropy entities related to $(\rho,\mathcal{N})$, i.e., the entropy of the input state $H(\rho)$, the entropy of the output state $H(\mathcal{N}(\rho))$, and the entropy exchange $H(\rho,\mathcal{N})=H(\rho_{E}^\prime)$, where $\rho_E^\prime = \Tr_Q \qty[ U\rho \otimes \rho_{E} U^\dagger]$ is the final state of the environment.
This channel specification is extensible to the general case with different Hilbert spaces $\mathcal{H}_A$ and $\mathcal{H}_B$ of the input and output states.

\subsection{Quantum Channel Capacity}

The quantum channel $\mathcal{N}$ has many capacity concepts due to diverse transmission purpose and resource auxiliary \cite{shor2003capacities}\cite{smith2010quantum}.
The classical capacity $C(\mathcal{N})$ and quantum capacity $Q(\mathcal{N})$ quantify respectively the maximal rate of classical information and quantum information that the quantum channel can asymptotically transmit with vanishing errors. Particularly, if the transmitted classical information is secret from the environment, the resulting private classical capacity $P(\mathcal{N})$ quantifies the capability for quantum cryptography.

When prior shared entanglement between the transmitter and receiver is not available, single-letter capacity formulas are not tractable in general, and regularization is required for explicit expressions, i.e.,
\begin{IEEEeqnarray}{rCl}
\chi(\mathcal{N}) \le C(\mathcal{N}) &=& \lim_{n\rightarrow\infty} \frac{1}{n} \chi \qty({\mathcal{N}}^{\otimes n}), \\
P^{(1)} \qty({\mathcal{N}}) \le P(\mathcal{N}) &=& \lim_{n\rightarrow\infty} \frac{1}{n} P^{(1)} \qty({\mathcal{N}}^{\otimes n}), \\
Q^{(1)} \qty({\mathcal{N}}) \le Q(\mathcal{N}) &=& \lim_{n\rightarrow\infty} \frac{1}{n} Q^{(1)} \qty({\mathcal{N}}^{\otimes n}).
\end{IEEEeqnarray}
Denote the input $A$, the output $B$, the environment $E$, and the quantum mutual information $I(X;Y)=H(X)+H(Y)-H(X,Y)$.
The single-letter expression of classical capacity is \cite{smith2010quantum}
\begin{equation}
\chi(\mathcal{N}) = \max_{p_x,\rho_x} I(X;B)_\sigma,
\end{equation}
which is evaluated on state $\sigma = \sum_x p_x\ket{x}\bra{x}_X \otimes \mathcal{N}\qty(\rho_x)$.
The single-letter expression of private classical capacity is \cite{smith2010quantum}
\begin{equation}
P^{(1)} \qty({\mathcal{N}}) = \max_{p_x,\rho_x} I(X;B)_\sigma - I(X;E)_\sigma,
\end{equation}
where $\sigma = \sum_x p_x \ket{x}\bra{x}_X \otimes U \qty( \rho_x \otimes \ket{0}\bra{0}_E )U^\dagger$.
The single-letter expression of quantum capacity is \cite{smith2010quantum}
\begin{equation}
Q^{(1)} \qty({\mathcal{N}}) = \max_\rho \qty( H(B) - H(E) ),
\end{equation}
where the entropies are evaluated on the state $\sigma_{BE}=U\rho\otimes \ket{0}\bra{0}_E U^\dagger$.

When prior shared entanglement is available, the entanglement assisted capacities have single-letter formulas \cite{smith2010quantum}, i.e.,
\begin{IEEEeqnarray}{rCl}
C_E(\mathcal{N}) &=& \max_{\phi_{AA'}} I(A;B)_\sigma, \\
Q_E(\mathcal{N}) &=& \frac{1}{2} C_E(\mathcal{N}),
\end{IEEEeqnarray}
where $\sigma=I\otimes\mathcal{N}\qty(\phi_{AA'})$ and the state $\phi_{AA'}$ is unrestricted.

\subsection{Explicit Example} 
\label{application}

We review some quantum channel capacities with exact expressions.

\subsubsection{Classical Capacity of Lossy Channel}

The bosonic channels use a collection of bosonic modes to transmit information \cite{wolf2007quantum}.
The bosonic channel is a continuous-variable system and has an infinite-dimensional Hilbert space \cite{weedbrook2012gaussian}\cite{holevo2001evaluating}\cite{eisert2007gaussian}. 
The $N$ bosonic modes correspond to $N$ quantized radiation modes of the electromagnetic field. 
%
%
Consider the multi-mode bosonic channel, $\mathcal{N} = \bigotimes_{k}\mathcal{N}_{k}$, where $\mathcal{N}_k$ is the loss map for the $k$th mode, which derives from the Heisenberg evolution $a_{k}^{\prime} = \sqrt{\eta_k} a_k + \sqrt{1-\eta_k} b_k$,
where $b_k$ is the vacuum noise mode, $a_k$ and $a_k^{\prime}$ are the annihilation operators of the input and output modes, and $0\le \eta_k \le1$ is the mode transmissivity.
For the capacity to converge, the mean energy $\mathcal{E}$ of the input state is constrained.

The classical capacity of the lossy bosonic channel, in bits per channel use, is expressed as \cite{giovannetti2004classical}
\begin{equation}
C = \max_{N_k} \sum_{k} g\qty( \eta_k N_k ),
\end{equation}
where $g(x)\equiv (x+1)\log_2(x+1) - x\log_{2}x$, and the maximization is performed on the average photon number set $\qty{N_k}$ that satisfies the energy constraint
$
\sum_{k} \hbar \omega_k N_k = \mathcal{E},
$
where $\omega_k$ is the frequency of the $k$the mode.
This capacity formula applies to the lossy channel with minimum noise \cite{giovannetti2004classical}.

\begin{example}
Consider a broadband channel, where the transmitter may use all frequencies $\omega\in [0, \infty)$ and all frequencies having the same channel transmissivity $\eta$. The capacity, in bits per second, is expressed as \cite{giovannetti2004classical}\cite{shapiro2004capacity}
\begin{equation}
C = \frac{\sqrt{\eta}}{\ln 2}\sqrt{\frac{\pi P}{3\hbar}},
\end{equation}
where$P=\mathcal{E}/T$ is the average transmitted power and $T = 2\pi/\Delta\omega$ is the transmission time.
\end{example}

\begin{example}
Consider a free-space optical channel \cite{giovannetti2004classical}\cite{giovannetti2004classicalfree}, where the transmitter and the receiver communicate through circular apertures of areas $A_t$ and $A_r$, separated by a $L$ meter propagation path.
In the far field regime, only a single spatial mode in the transmitter couples a significant amount of power to the receiver. Such is the case at frequency $\omega$, the transmissivity $\eta(\omega/\omega_0)=D(\omega) \ll 1$, where $D(\omega)= (\omega/\omega_0)^2$ and $\omega_0 = 2\pi cL/\sqrt{A_t A_r}$ are respectively the Fresnel number and Fresnel frequency.
For a broadband channel with maximum transmitter frequency $\omega_c$ and $D(\omega_c) \ll 1$, the capacity, in bits per second, is expressed as \cite{giovannetti2004classical}\cite{giovannetti2004classicalfree}
\begin{equation}
C = \frac{\omega_c }{2\pi y_0} \int_{0}^{y_0} \dd{x} g\qty(\frac{1}{e^{1/x}-1}),
\end{equation} 
where $y_0$ is a dimensional less parameter that is determined by the power constraint
$P = P_0 \int_{0}^{y_0} \frac{\dd{x}}{x} \frac{1}{e^{1/x}-1}$ and $P_0 = \frac{2\pi \hbar c^2 L^2}{A_t A_r}$ is a reference power for normalization.
\end{example}

\subsubsection{Quantum Capacity of Degradable Channel}

A Gaussian channel is of form $\mathcal{N}(\rho) = \Tr_{E}\qty[ U(\rho\otimes \rho_E) U^\dagger ]$, where $U$ is a Gaussian unitary, determined by a quadratic bosonic Hamiltonian, and $\rho_E$ is a Gaussian state \cite{holevo2011probabilistic}.
A channel $\mathcal{N}(\rho) = \Tr_E[U (\rho\otimes\rho_E) U^\dagger ]$ is degradable \cite{devetak2005capacity}\cite{caruso2006degradability}, if it can be degraded to its conjugate $\mathcal{N}^c = = \Tr_B[U (\rho\otimes\rho_E) U^\dagger ]$, i.e., there is a map $\mathcal{T}: \mathcal{H}_B \mapsto \mathcal{H}_E$ such that $\mathcal{N}^c = \mathcal{T} \circ \mathcal{N}$, where $\circ$ denotes the composition of operators.
A large class of Gaussian channels are degradable \cite{devetak2005capacity}\cite{caruso2006degradability}\cite{wolf2007quantum}, e.g., the lossy channel.

The quantum capacity of the degradable Gaussian channel $\mathcal{N}=\bigotimes_{k}\mathcal{N}_k$, in qubits per channel use, is expressed as \cite{wolf2007quantum}
\begin{equation}
Q = \sum_{k} \sup_{\rho_G} J\qty(\rho_G, \mathcal{N}_k),
\end{equation}
where $J\qty(\rho_G, \mathcal{N}_k)$ is the coherent information and the supremum is taken over the Gaussian input states $\rho_{G}$.

\begin{example}
Consider a single-mode attenuation (amplification) channel with transmissivity $\eta$ (gain $\sqrt{\eta}$). The capacity, in qubits per channel use, is expressed as \cite{wolf2007quantum}
\begin{equation}
Q = \log_2|\eta| - \log_2|1-\eta|. 
\end{equation}
Let $\eta = e^{-l/l_a}$, $l$ and $l_a$ are respectively the transmission length and the absorption length for a transmission link, while $l$ and $l_a$ are respectively the storage and the decay time for quantum memories.
\end{example}

We present a discrete-variable degradable quantum channel, which is complementary to the bosinic Gaussian channel with continuous-variable quantum system and environment.

\begin{example}
Consider qubit channels with a qubit environment, both of two dimensions \cite{wolf2007quantum-qubit}.
Consider the Kraus operator representation, $\rho \mapsto \mathcal{N}(\rho) = \sum_{i=1}^{2}A_i \rho A_i^\dagger$, with $A_1 = [\cos(\alpha)\ 0;\ 0\ \cos(\beta)]$ and $A_2 = [0\ \sin(\beta);\ \sin(\alpha)\ 0]$.
The supremum of coherent information is taken over the diagonal input states.
In the region of nonzero capacity, $\cos(2\alpha)/\cos(2\beta)>0$, the capacity, in qubits per channel use, is expressed as
\begin{IEEEeqnarray}{rCl}
Q = && \max_{p\in[0,1]} h\qty( p\cos^2(\alpha) + (1-p)\sin^2(\beta) ) \nonumber\\
&& -h\qty( p\sin^2(\alpha) + (1-p)\sin^2(\beta) ),
\end{IEEEeqnarray}
where $h(x) = -x\log_{2}x - (1-x)\log_2(1-x)$ is the binary entropy function.
For $\alpha =\beta$ and $\beta=0$, it represents respectively a dephasing channel and an amplitude damping channel.
\end{example}

\subsection{Operational Extension}

We study how to use the quantum channel capacity for operational purpose
and we need a capacity concept to quantify the transmission capability of the quantum channel through a sequence of time slots. 

A direct approach is to define the quantum channel capacity through consecutive quantum channel uses.
Consider the single-letter formula $f(\mathcal{N})$ and the regularization $\overline{f}(\mathcal{N}) = \lim_{n\rightarrow\infty}\frac{1}{n} f\qty(\mathcal{N}^{\otimes n})$, where $f$ represents $\chi$, $P^{(1)}$, and $Q^{(1)}$.
The regularization implies that the capacity is always additive on parallel use of the same channel, i.e., $\overline{f}\qty(\mathcal{N}^{\otimes n}) = n\overline{f}(\mathcal{N})$. 
If $f(\mathcal{N})$ is additive, then $f\qty(\mathcal{N}^{\otimes n})=n f(\mathcal{N})$, $\overline{f}(\mathcal{N})=f(\mathcal{N})$, and $\overline{f} \qty(\mathcal{N}_{t_1}  \otimes \ldots \otimes \mathcal{N}_{t_n}) = {f} \left( \mathcal{N}_{t_1} \right) + \ldots + f\left( \mathcal{N}_{t_n} \right)$.
In general, if the additivity of $f(\mathcal{N})$ is not known, then
\begin{equation}
\overline{f} \qty(\mathcal{N}_{t_1}  \otimes \ldots \otimes \mathcal{N}_{t_n}) \ge \overline{f} \left( \mathcal{N}_{t_1} \right) + \ldots + \overline{f} \left( \mathcal{N}_{t_n} \right),
\end{equation}
which indicates that the capacity on consecutive use of different channels is super-additive.
In addition, the tensor product indicates that this definition does not take into account the dependence between the quantum channels at different time.
Moreover, this type of definition is non-causal and unrealistic, since the information can not rely on the future channel for transmission at present.

Operationally, the information transmission is additive, because the amount of information is the amount of the storage space \cite{horodecki2005partial}. 
In view of this, we propose a new capacity concept, cumulative capacity.

\begin{definition}
The sum of the capacity through a period of time $[t_1,t_n]$ is defined as cumulative capacity, i.e.,
\begin{equation}
S \qty({t_1}, {t_n}) := f^\ast \left( \mathcal{N}_{t_1} \right) + \ldots + f^\ast \left( \mathcal{N}_{t_n} \right),
\end{equation}
where $f^\ast$ represents $C$, $P$, $Q$, etc.
\end{definition}

By definition, the cumulative capacity has strict additivity property and the temporal dependence in the quantum channel is explicitly involved.

\begin{lemma}
The cumulative capacity is additive over time, i.e., for $t_m\le t_k \le t_n$,
\begin{equation}
S \qty(t_m, t_n) = S\qty(t_m,t_k) + S\qty(t_{k+1},t_n).
\end{equation}
\end{lemma}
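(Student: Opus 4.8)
The plan is to derive the identity directly from the definition of cumulative capacity, since $S$ is constructed as a finite sum of the per-slot capacities $f^\ast(\mathcal{N}_{t_i})$, and the claimed additivity is therefore a matter of regrouping that sum. The only structural fact I rely on is that the time slots are discretely and consecutively indexed, so that a cut placed at $t_k$ cleanly partitions the interval $[t_m, t_n]$.

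First I would expand the left-hand side by the definition,
\begin{equation*}
S(t_m, t_n) = f^\ast(\mathcal{N}_{t_m}) + \cdots + f^\ast(\mathcal{N}_{t_n}).
\end{equation*}
Next, I would observe that for $t_m \le t_k \le t_n$ the index set $\{t_m, \ldots, t_n\}$ splits into the two disjoint blocks $\{t_m, \ldots, t_k\}$ and $\{t_{k+1}, \ldots, t_n\}$, whose union is the full set with no slot omitted or counted twice. Invoking the associativity and commutativity of addition on the real-valued terms $f^\ast(\mathcal{N}_{t_i})$, I would regroup the sum as
\begin{equation*}
S(t_m, t_n) = \left( f^\ast(\mathcal{N}_{t_m}) + \cdots + f^\ast(\mathcal{N}_{t_k}) \right) + \left( f^\ast(\mathcal{N}_{t_{k+1}}) + \cdots + f^\ast(\mathcal{N}_{t_n}) \right),
\end{equation*}
and then recognize the two parenthesized groups as $S(t_m, t_k)$ and $S(t_{k+1}, t_n)$ by the definition, which completes the argument.

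The main obstacle here is essentially cosmetic rather than mathematical: the substantive content lives in the definition itself, where additivity is built in by summing scalar capacity values slot by slot. This is precisely the contrast with the non-causal tensor-product capacity $\overline{f}(\mathcal{N}^{\otimes n})$ discussed earlier, which is only super-additive and may fail exact additivity; working on the cumulative scalar quantity rather than on the channels is exactly what makes the split valid. The one point that genuinely requires care is verifying that the partition is exact, i.e. that $t_k$ and $t_{k+1}$ are consecutive slot indices so that the two sub-intervals abut without gap or overlap. Once the discrete indexing is fixed, this is immediate, and the regrouping step carries the result.
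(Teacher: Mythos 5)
Your proof is correct and follows the same route as the paper, which simply states that the result follows directly from the definition; your expansion of the sum and regrouping into the two disjoint index blocks is exactly the content the paper leaves implicit. Nothing further is needed.
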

\begin{proof}
The proof directly follows the definition.
\end{proof}

\begin{remark}
The additivity of classical and quantum information transmission is an operational reality in practice, while the non-additivity of the quantum channel capacity is a mathematical issue in quantum Shannon theory.
The regularization resolves the need for a mathematical expression of the quantum channel capacity, which is additive on the parallel use of the same channel. In general, the private capacity and quantum capacity are non-additive, and the additivity of the classical capacity is unknown \cite{smith2010quantum}, which indicates that the exact transmission capability of the quantum channel is still unknown. 
Instead, the cumulative capacity concepts defines the actual transmission amount of the quantum channel, which can be based on either the existing quantum channel capacity concepts or the postulation of an exact capacity formula.
\end{remark}

\begin{remark}
The accumulation of the asymptotic capacity through time stresses the ultimate transmission capability the channel can achieve in one time slot. Alternatively, the accumulation of the one-shot capacity stresses the finite channel uses in reality. The definition of the cumulative capacity is able to describe both accumulation scenarios and to involve the temporal dependence in capacity.
\end{remark}

\section{Performance Analysis}
\label{performance-analysis}

We provide queueing analysis of both classical and quantum information transmission through the quantum channel, with a focus on the latter. 
We investigate the statistical distribution of the performance measures of the quantum channel, and obtain both general results, which have no specifications of the arrival process and the capacity process, and specific results, which refines the general results taking advantage of the dependence property of the underlying processes.
We denote $\bigvee(X,Y)=\max(X,Y)$ and $\bigwedge(X,Y)=\min(X,Y)$.

\subsection{Queueing of Classical Information}

We show that the classical queue model is able to describe the queueing behavior of the classical information in the quantum channel with classical storage at the transmitter and receiver terminals.

\subsubsection{Queueing Principle}

The channel is essentially a classical queueing system with cumulative service process $S(t)$ and cumulative arrival process $ A(0,t)=\sum\limits_{s=0}^{t}a(s)$, 
where $a(t)$ denotes the traffic input to the channel at time slot $t$,
and the temporal increment in the system is expressed as
\begin{equation}
X(t) = a(t)-C(t).
\end{equation} 
The queueing behavior of the channel is expressed through the backlog in the system, which is a reflected process of the temporal increment $X(t)$ \cite{asmussen2003applied}, i.e.,
\begin{equation}
B(t+1) = \left[ B(t) + X(t)\right]^{+},
\end{equation}
where $[\cdot]^{+} := \bigvee(\cdot,0)$.
By iteration,
the backlog function is expressed as
\begin{equation}
B(t) = B(0) + \sup_{0\le{s}\le{t}}({A}(s,t)-{S}(s,t)). 
\end{equation}

Assume no loss, the output is the difference between the input and backlog, 
\begin{IEEEeqnarray}{rCl}
A^{\ast}(t) &=& A(t) - ( B(t) - B(0) ) \\
&=& \inf_{0\le s\le t} (A(0,s)+S(s,t)), \IEEEeqnarraynumspace
\end{IEEEeqnarray}
and the delay is defined via the input-output relationship, i.e., 
\begin{equation}
D(t) = \inf\left\{ d\ge{0}: A(t-d)\le A^\ast(t) \right\},
\end{equation}
which is the virtual delay that a hypothetical arrival has experienced on departure.

We presents the statistical tail probabilities of the performance measures in the following theorem. We assume $B(0)=0$, i.e., the queue is empty at the beginning.
We present the proof in Appendix \ref{classical-queue-theorem-general}.

\begin{theorem}\label{general-classical-theorem}
Consider classical information transmission. The tail of backlog is bounded by
\begin{IEEEeqnarray}{rCl}
\IEEEeqnarraymulticol{3}{l}{
\Pr( B(t)>x) = \Pr \qty{ \sup_{0\le{s}\le{t}}({A}(s,t)-{S}(s,t)) >x} 
} \IEEEeqnarraynumspace\\
&\le& \sum_{s=0}^{t} \mathbb{E}\qty[ e^{\theta(A(s,t)-S(s,t))} ] \cdot e^{-\theta x},
\end{IEEEeqnarray}
the tail of throughput is bounded by
\begin{IEEEeqnarray}{rCl}
\IEEEeqnarraymulticol{3}{l}{
\Pr \qty( A^{\ast}(t) >x ) = \Pr \qty{ \inf_{0\le s\le t} (A(0,s)+S(s,t)) >x } 
} \IEEEeqnarraynumspace\\
&\le& \bigwedge_{0\le s\le t} \mathbb{E}\qty[ e^{\theta(A(0,s) + S(s,t))} ] \cdot e^{-\theta x},
\end{IEEEeqnarray}
and the tail of delay is bounded by
\begin{IEEEeqnarray}{rCl}
\IEEEeqnarraymulticol{3}{l}{
\Pr ( D(t)>d) = \Pr \qty{ A(t-d) > A^\ast(t) } 
}\\
&\le& \sum\limits_{0\le s\le t} \mathbb{E} \qty[  e^{\theta \qty( A(s,t) - S(s,t) ) p} ]^{1/p} \mathbb{E}\qty[ e^{-\theta A(t-d,t) q} ]^{1/q}, \IEEEeqnarraynumspace
\end{IEEEeqnarray}
where $p$ and $q$ are positive with $1/p+1/q=1$.
\end{theorem}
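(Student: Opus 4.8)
The plan is to treat each of the three tail bounds as an application of the Chernoff bound, exploiting the explicit suprema/infima representations of the performance measures already derived in the queueing principle. For the backlog, I would start from the identity $B(t)=\sup_{0\le s\le t}(A(s,t)-S(s,t))$ and note that the event $\{B(t)>x\}$ is the union over $s\in\{0,\dots,t\}$ of the events $\{A(s,t)-S(s,t)>x\}$. Applying a union bound gives $\Pr(B(t)>x)\le\sum_{s=0}^{t}\Pr(A(s,t)-S(s,t)>x)$, and then a standard Chernoff/Markov step with parameter $\theta>0$ converts each term into $\mathbb{E}[e^{\theta(A(s,t)-S(s,t))}]e^{-\theta x}$. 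Summing over $s$ yields the stated bound.

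For the throughput, the key observation is that $A^\ast(t)=\inf_{0\le s\le t}(A(0,s)+S(s,t))$, so the event $\{A^\ast(t)>x\}$ forces every term in the infimum to exceed $x$; in particular, for each fixed $s$ we have $\{A^\ast(t)>x\}\subseteq\{A(0,s)+S(s,t)>x\}$. Thus $\Pr(A^\ast(t)>x)$ is bounded by the minimum over $s$ of $\Pr(A(0,s)+S(s,t)>x)$, and a Chernoff bound on each candidate $s$ gives $\mathbb{E}[e^{\theta(A(0,s)+S(s,t))}]e^{-\theta x}$. Taking the tightest such $s$ produces the $\bigwedge_{0\le s\le t}$ in the statement. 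The only subtlety here, compared to the backlog case, is that because we have a minimum rather than a union we do not sum but instead select the best single bound.

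For the delay, I would begin from $\{D(t)>d\}=\{A(t-d)>A^\ast(t)\}$ and substitute the infimum representation of $A^\ast(t)$, so that $\{A(t-d)>A^\ast(t)\}=\{\exists s: A(t-d)>A(0,s)+S(s,t)\}$, again a union over $s$. After a union bound and a Chernoff step with parameter $\theta>0$, each term becomes $\mathbb{E}[e^{\theta(A(s,t)-S(s,t))}e^{-\theta A(t-d,t)}]$ once $A(t-d)-A(0,s)$ is rewritten as $-(A(t-d,t)-A(s,t))$ (with appropriate handling of the index $s$ relative to $t-d$). The mixed expectation of a product of the arrival/service term and the backward-arrival term is then split using H\"older's inequality with conjugate exponents $p,q$, giving the product $\mathbb{E}[e^{\theta(A(s,t)-S(s,t))p}]^{1/p}\mathbb{E}[e^{-\theta A(t-d,t)q}]^{1/q}$ summed over $s$.

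The main obstacle I anticipate is the delay bound: unlike backlog and throughput, the delay event couples two correlated quantities (the cumulative arrivals $A(t-d,t)$ near the present and the net increment $A(s,t)-S(s,t)$ over the busy period), so a single joint moment generating function does not factor. The H\"older splitting is exactly the device that circumvents this, trading exactness for a pair of decoupled moment terms, but I would need to be careful with the algebra of the index ranges — in particular with the relative ordering of $s$, $t-d$, and $t$ — to ensure the rewriting of $A(t-d)>A(0,s)+S(s,t)$ into the claimed increment form is valid across all $s$ in the sum. The Chernoff and union-bound steps for the first two measures are routine; the bookkeeping of the overlapping arrival intervals and the correct application of H\"older are where the real work lies.
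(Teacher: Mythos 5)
Your proposal is correct and follows essentially the same route as the paper: union bound plus Chernoff for the backlog, the inclusion $\{A^\ast(t)>x\}\subseteq\{A(0,s)+S(s,t)>x\}$ plus Chernoff for the throughput, and the rewriting $A(t-d)-A(0,s)=A(s,t)-A(t-d,t)$ followed by H\"older for the delay. The only cosmetic difference is in the delay bound, where you apply the union bound at the probability level before Chernoff and H\"older, whereas the paper applies Chernoff and H\"older to the supremum first and then bounds $\mathbb{E}[\sup_s(\cdot)]^{1/p}$ by the sum; both orderings yield the identical final expression.
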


\begin{remark}
Based on the union bound, $\Pr( B(t) \le x) = \Pr \{ \sup_{0\le{s}\le{t}}({A}(s,t)-{S}(s,t)) \le x \} \le \sum_{0\le{s}\le{t}}\Pr \{ ({A}(s,t)-{S}(s,t)) \le x \}$ and $\Pr ( D(t) \le d) = \Pr \{ \sup_{0\le{s}\le{t}}( A(t-d) - {A}(0,s)-{S}(s,t)) \le 0 \} \le \sum_{0\le{s}\le{t}} \Pr \{ ( A(t-d) - {A}(0,s)-{S}(s,t)) \le 0 \}$.
According to $\Pr(\bigwedge(X,Y)\le z) = \Pr(X\le z) + \Pr(Y\le z) - \Pr(X\le z, Y\le z)$,
$\Pr \qty( A^{\ast}(t) \le x ) = \Pr \{ \inf_{0\le s\le t} (A(0,s)+S(s,t)) \le x \} \le \sum_{0\le s\le t}\Pr \{  (A(0,s)+S(s,t)) \le x \}$.
It is easy to obtain upper bounds of the distributions or lower bounds of the tails of the performance measures, taking advantage of the Chernoff bound $\Pr(X \le x) \le \mathbb{E}[e^{-\theta X}]e^{\theta{x}}$, $\theta>0$ and the fact $\Pr(X>x)=1-\Pr(X\le x)$.
\end{remark}

\subsubsection{Distribution Refinement}

We consider both independence and dependence in the arrival and capacity processes, which are treated respectively as independently and identically distributed process and Markov additive process, which is introduced in Appendix \ref{markov-additive-process-description}.
We present the proofs in Appendix \ref{iid-classical-theorem-proof} and Appendix \ref{map-classical-theorem-proof}.

\begin{theorem}[I.I.D. Process]\label{iid-classical-theorem}
Consider a quantum channel with constant classical capacity $S(t)=C\cdot t$ and independently and identically distributed arrival process $a(t)\overset{\mathrm{d}}{=} a$.
The backlog and delay are bounded by
\begin{IEEEeqnarray}{rCl}
\Pr(B > x) &\le& e^{-\theta{x}}, \\
\Pr(D > d) &\le& e^{-\theta{C{d}}},
\end{IEEEeqnarray}
where $\theta$ is the root to $\kappa(\theta)=0$, $\kappa(\theta)= \log\int e^{\theta(a(t) -C)} F(dx)$.
The tail of throughput is bounded by
\begin{IEEEeqnarray}{rCl}
\Pr \qty( A^{\ast}(t) >x ) \le \bigwedge_{0\le s\le t} \mathbb{E}\qty[ e^{\theta a } ]^{s} \cdot e^{\theta (t-s)C } \cdot e^{-\theta x},
\end{IEEEeqnarray}
where $\theta>0$ is free for optimization.

Consider a quantum channel with independently and identically distributed classical capacity $C(t) \overset{\mathrm{d}}{=} C$ and constant arrival process $A(t)=\lambda\cdot t$.
The backlog and delay are bounded by
\begin{IEEEeqnarray}{rCl}
\Pr(B > x) &\le& e^{-\theta{x}}, \\
\Pr(D > d) &\le& e^{-\theta{\lambda{d}}},
\end{IEEEeqnarray}
where $\theta$ is the root to $\kappa(\theta)=0$, $\kappa(\theta)= \log\int e^{\theta(\lambda -C(t))} F(dx)$.
The tail of throughput is bounded by
\begin{IEEEeqnarray}{rCl}
\Pr \qty( A^{\ast}(t) >x ) \le \bigwedge_{0\le s\le t} \mathbb{E}\qty[ e^{\theta C } ]^{t-s} \cdot e^{\theta s \lambda } \cdot e^{-\theta x},
\end{IEEEeqnarray}
where $\theta>0$ is free for optimization.
\end{theorem}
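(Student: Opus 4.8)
The plan is to read the backlog supremum as the running maximum of a random walk and to control it with the exponential (Wald) martingale attached to the Lundberg root. Take the first scenario, with deterministic service $S(t)=C\cdot t$ and i.i.d.\ increments $Y_i:=a(i)-C$. The defining relation $\kappa(\theta)=\log\mathbb{E}\qty[e^{\theta(a-C)}]=0$ is exactly the statement $\mathbb{E}\qty[e^{\theta Y}]=1$, so that $M_n:=e^{\theta W_n}$ with $W_n:=\sum_{i=1}^{n}Y_i$ is a nonnegative martingale normalized by $M_0=1$. Because the arrivals are i.i.d., exchangeability of the increments (a time reversal of the window $[s,t]$) identifies the backlog supremum with a forward random-walk maximum,
\[
\sup_{0\le s\le t}\qty(A(s,t)-S(s,t)) \;=\; \sup_{0\le s\le t}\sum_{u=s}^{t}\qty(a(u)-C) \;\overset{\mathrm{d}}{=}\; \sup_{0\le n\le t}W_n .
\]

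First I would apply Doob's (Ville's) maximal inequality to the nonnegative martingale $M_n$: since $\Pr\qty{\sup_{0\le n\le t}M_n\ge e^{\theta x}}\le e^{-\theta x}\,\mathbb{E}\qty[M_0]=e^{-\theta x}$, the identity above yields $\Pr(B(t)>x)\le e^{-\theta x}$ uniformly in $t$, which is the stated backlog bound. The delay bound then follows from the rate conversion built into a work-conserving constant-speed server: the work present at time $t$ drains in time $B(t)/C$, so the virtual delay defined through the input--output relation satisfies $\{D(t)>d\}=\{B(t)>Cd\}$, whence $\Pr(D(t)>d)\le e^{-\theta C d}$. I would verify this set identity directly from $D(t)=\inf\qty{d\ge 0:A(t-d)\le A^{\ast}(t)}$ and $A^{\ast}(t)=\inf_{0\le s\le t}\qty(A(0,s)+S(s,t))$, which reduces $\{D(t)>d\}$ to a shifted random-walk crossing of the level $Cd$.

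For the throughput I would simply specialize the general inequality of Theorem \ref{general-classical-theorem}. Here $\theta>0$ is a free Chernoff parameter (not the Lundberg root): since the arrivals are i.i.d.\ and independent of the deterministic service $S(s,t)=C(t-s)$, the moment generating function factorizes, $\mathbb{E}\qty[e^{\theta(A(0,s)+S(s,t))}]=\mathbb{E}\qty[e^{\theta a}]^{s}\,e^{\theta(t-s)C}$, and minimizing over $s$ reproduces the claimed tail. The second scenario, with i.i.d.\ capacity $C(t)\overset{\mathrm{d}}{=}C$ and deterministic arrivals $A(t)=\lambda t$, is entirely symmetric: the increments are now $\lambda-C(i)$, the Lundberg condition reads $\mathbb{E}\qty[e^{\theta(\lambda-C)}]=1$, and the same martingale-and-reversal argument gives $\Pr(B>x)\le e^{-\theta x}$. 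The delay conversion now runs at the arrival rate, because the deterministic input deposits exactly $\lambda d$ units of work in any window of length $d$; hence $\{D>d\}=\{B>\lambda d\}$ and $\Pr(D>d)\le e^{-\theta\lambda d}$, while the throughput factorization repeats with the roles of $a$ and $C$ exchanged.

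The step I expect to be the main obstacle is making the Lundberg root and the maximal-inequality passage rigorous rather than formal. This needs the negative-drift (stability) condition $\mathbb{E}[a]<C$ together with a light-tail assumption ensuring $\kappa$ is finite and strictly convex near the origin, so that $\kappa(\theta)=0$ has a unique positive solution $\theta$; and it needs the nonnegative-martingale maximal inequality to promote the single-time Chernoff estimate to a bound on the supremum over $0\le s\le t$, which is precisely what makes the backlog and delay bounds uniform in $t$ and lets them pass to the stationary regime. A secondary subtlety, worth doing carefully, is deriving the set identities $\{D>d\}=\{B>Cd\}$ and $\{D>d\}=\{B>\lambda d\}$ from the stated delay definition rather than from a fluid heuristic, since it is these identities that transfer the exponent faithfully.
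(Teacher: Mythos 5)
Your proposal is correct and follows essentially the same route as the paper: a time-reversal identification of the backlog/delay events with the maximum of a random walk, the Lundberg bound at the root of $\kappa(\theta)=0$ for those two quantities, and a specialization of the general Chernoff bound with factorization of the moment generating function for the throughput. The only difference is cosmetic --- you supply the exponential-martingale (Ville/Doob) proof of Lundberg's inequality where the paper simply cites it --- and your cautionary remarks (the stability and light-tail conditions, and the careful derivation of the delay--backlog set identity, which strictly involves $B(t-d)$ rather than $B(t)$ but coincides in the stationary regime the theorem addresses) are exactly the points the paper leaves implicit.
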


\begin{theorem}[Markov Additive Process]\label{map-classical-theorem}
Consider a quantum channel with constant classical capacity $S(t)=C\cdot t$ and Markov additive arrival process $A(t)$. Conditional on the initial state $i=J_0\in E$ of the arrival process.
The backlog and delay are bounded by
\begin{IEEEeqnarray}{rCl}
\Pr_{i}(B>x) &\le&  \frac{h_{J_0}{(\theta)}}{\min_{j\in{E}}h_{j}{(\theta)}}e^{-\theta{x}}, \\
\Pr_{i}(D>d) &\le&  \frac{h_{J_0}{(\theta)}}{\min_{j\in{E}}h_{j}{(\theta)}}e^{-\theta{C{d}}},
\end{IEEEeqnarray}
where $\theta>0$ is the root to $\kappa(\theta)=0$, $\kappa(\theta)$ and $\bm{h}(\theta)$ are respectively the logarithm of the Perron-Frobenius eigenvalue and the corresponding right eigenvector of the kernel for the Markov additive process $A(t)-C\cdot t$.
The tail of throughput is bounded by
\begin{IEEEeqnarray}{rCl}
\Pr_{i} \qty( A^{\ast}(t) >x ) 
\le \bigwedge_{0\le s\le t} \frac{h_{J_0}(\theta)}{\min\limits_{j\in E}h_{j}(\theta)} e^{s\kappa(\theta) + \theta (t-s)C } \cdot e^{-\theta x}, \IEEEeqnarraynumspace
\end{IEEEeqnarray}
where $\theta>0$, $\kappa(\theta)$ and $\bm{h}(\theta)$ are respectively the logarithm of the Perron-Frobenius eigenvalue and the corresponding right eigenvector of the kernel for the Markov additive process $A(t)$.

Consider a quantum channel with Markov additive classical capacity and constant arrival process.
Conditional on the initial state $i=J_0\in E$ of the capacity process.
The backlog and delay are bounded by
\begin{IEEEeqnarray}{rCl}
\Pr_{i}(B>x) &\le&  \frac{h_{J_0}{(\theta)}}{\min_{j\in{E}}h_{j}{(\theta)}}e^{-\theta{x}}, \\
\Pr_{i}(D>d) &\le&  \frac{h_{J_0}{(\theta)}}{\min_{j\in{E}}h_{j}{(\theta)}}e^{-\theta{\lambda{d}}},
\end{IEEEeqnarray}
where $\theta>0$ is the root to $\kappa(-\theta)=0$, $\kappa(\theta)$ and $\bm{h}(\theta)$ are respectively the logarithm of the Perron-Frobenius eigenvalue and the corresponding right eigenvector of the kernel for the Markov additive process $S(t)-\lambda\cdot t$.
The tail of throughput is bounded by
\begin{IEEEeqnarray}{rCl}
\Pr_{i} \qty( A^{\ast}(t) >x ) 
&\le& \bigwedge_{0\le s\le t} \frac{\max\limits_{k\in E}h_{k}(\theta)}{\min\limits_{j\in E}h_{j}(\theta)} e^{(t-s)\kappa(\theta) + \theta s\lambda } \cdot e^{-\theta x}, \IEEEeqnarraynumspace
\end{IEEEeqnarray}
where $\theta>0$, $\kappa(\theta)$ and $\bm{h}(\theta)$ are respectively the logarithm of the Perron-Frobenius eigenvalue and the corresponding right eigenvector of the kernel for the Markov additive process $S(t)$.
\end{theorem}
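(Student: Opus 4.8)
The plan is to derive all three bounds from a single device, the Perron--Frobenius (exponential) martingale of the relevant Markov additive process, combined with optional stopping for the backlog and delay and with the Chernoff bound for the throughput. Recall from Appendix~\ref{markov-additive-process-description} that the kernel $\hat F[\theta]_{ij}=\mathbb{E}_i[e^{\theta X_1};\,J_1=j]$ of a MAP $\{(J_n,X_n)\}$ admits a Perron--Frobenius eigenvalue $e^{\kappa(\theta)}$ with strictly positive right eigenvector $\bm h(\theta)=(h_j(\theta))_{j\in E}$, i.e.\ $\hat F[\theta]\bm h(\theta)=e^{\kappa(\theta)}\bm h(\theta)$, equivalently $\mathbb{E}_i[e^{\theta X_1}h_{J_1}(\theta)]=e^{\kappa(\theta)}h_i(\theta)$. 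This identity yields the martingale $M_n=e^{\theta X_n-n\kappa(\theta)}\,h_{J_n}(\theta)/h_{J_0}(\theta)$ with $M_0=1$, which is the workhorse for the whole argument.

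For the backlog I would start from the representation $B(t)=\sup_{0\le s\le t}(A(s,t)-S(s,t))$ of the queueing principle. In the first setting $S(s,t)=C(t-s)$, so the increment $A(s,t)-C(t-s)$ is the increment of the net MAP $A(t)-C\cdot t$ named in the statement; stability makes its drift negative, hence by convexity $\kappa$ has a positive root with $\kappa(\theta)=0$, at which $M_n$ reduces to $e^{\theta X_n}h_{J_n}(\theta)/h_{J_0}(\theta)$. Applying optional stopping to $M_{\tau\wedge t}$ at $\tau=\inf\{n:\,X_n>x\}$ and discarding the nonnegative remainder gives $1=\mathbb{E}_i[M_{\tau\wedge t}]\ge \mathbb{E}_i[M_\tau;\,\tau\le t]\ge e^{\theta x}\,\frac{\min_j h_j(\theta)}{h_{J_0}(\theta)}\,\Pr_i(\tau\le t)$, since $X_\tau>x$ and $h_{J_\tau}(\theta)\ge\min_j h_j(\theta)$ on $\{\tau\le t\}$; rearranging is exactly the stated backlog bound. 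The second setting is symmetric with the MAP $S(t)-\lambda\cdot t$, whose positive drift forces the adjustment coefficient to solve $\kappa(-\theta)=0$, which accounts for the sign change in the root condition.

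For the delay and throughput I would avoid re-running the martingale. The delay follows from the backlog through the input--output relation $\Pr_i(D(t)>d)=\Pr_i(A(t-d)>A^\ast(t))$: substituting $A^\ast(t)=\inf_{0\le s\le t}(A(0,s)+S(s,t))$ together with the constant component ($S(s,t)=C(t-s)$ in the first setting, $A(\tau)=\lambda\tau$ in the second) unfolds the event into $\{B>Cd\}$ respectively $\{B>\lambda d\}$, so the delay bounds are the backlog bound evaluated at level $Cd$ respectively $\lambda d$. For the throughput I would use that $\inf_s(\cdot)>x$ iff every term exceeds $x$, giving $\Pr_i(A^\ast(t)>x)\le\bigwedge_{0\le s\le t}\Pr_i(A(0,s)+S(s,t)>x)$, and bound each term by Chernoff. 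The eigenvalue identity supplies $\mathbb{E}_i[e^{\theta A(0,s)}]\le e^{s\kappa(\theta)}h_{J_0}(\theta)/\min_j h_j(\theta)$ for the MAP $A(t)$, which after shifting by the deterministic $C(t-s)$ produces the first throughput bound. In the constant-arrival case the increment $S(s,t)$ starts from the \emph{random} state $J_s$ rather than the fixed $J_0$, so conditioning on $J_s$ and using $\mathbb{E}_k[e^{\theta S(0,t-s)}]\le e^{(t-s)\kappa(\theta)}h_k(\theta)/\min_j h_j(\theta)$ forces the replacement of $h_{J_0}(\theta)$ by $\max_{k\in E}h_k(\theta)$, precisely the prefactor in the statement.

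The main obstacle will be the identification underlying the backlog step: the supremum defining $B(t)$ runs over the \emph{terminal} segment $A(s,t)-S(s,t)=X_t-X_s$, so $B(t)=X_t-\inf_{0\le s\le t}X_s$ is a reflected maximum, whereas the optional-stopping argument naturally controls the forward first passage $\sup_{0\le n\le t}X_n$. Bridging the two requires passing to the time-reversed chain and checking that it is again a MAP with the \emph{same} Perron--Frobenius eigenvalue $\kappa(\theta)$ (only the eigenvector and the initial state change), and then tracking how the reversal interacts with the conditioning on $J_0$. Getting this bookkeeping right is exactly what distinguishes the clean $h_{J_0}(\theta)$ prefactor, valid when the additive part is read from the fixed initial state, from the $\max_{k}h_k(\theta)$ prefactor, which is unavoidable once the increment is read from the random intermediate state $J_s$.
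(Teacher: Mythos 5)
Your proposal is correct and follows essentially the same route as the paper: the backlog and delay bounds are the Markov-additive Lundberg inequality (which the paper cites as a black box and you unpack via the exponential eigenvector martingale and optional stopping, together with the same time-reversal identification of the reflected supremum), and the throughput bound is obtained exactly as in the paper by Chernoff plus the eigenvalue identity $\mathbb{E}_i[e^{\theta A(t)}h_{J_t}(\theta)]=h_{J_0}(\theta)e^{t\kappa(\theta)}$, with your explanation of the $\max_k h_k(\theta)$ prefactor in the random-capacity case matching the stated result. Your explicit flagging of the time-reversal bookkeeping is a point the paper passes over with the phrase ``time reversibility assumption,'' but it does not change the argument.
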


\begin{remark}
We highlight the difference of the adjustment coefficient in the decay exponent of the tail bounds between the deterministic arrival case and the deterministic capacity case.
In addition, we refer the reader to \cite{rolski1998stochastic}\cite{asmussen2010ruin}\cite{sun2018hidden} for related results with random arrival and random capacity and for approach to obtain a complementary lower bound of the tail distribution.
Particularly, the approach to get the tail lower bound in  \cite{sun2018hidden}  is different from the $\Pr(X>x)=1-\Pr(X\le x)$ approach in the previous remark.
\end{remark}

\subsection{Queueing of Quantum Information}

Consider there are quantum storage at the transmitter and receiver terminals.
To describe the queueing behavior of the quantum states, we propose a new queue model to characterize both the accumulation of the incoming workload and the communication potential in the queue.

\subsubsection{Queueing Principle}

We use the queue to store both the incoming workload and the generated communication potential, 
specifically, the positive sign of the queue size indicates the storage of the incoming workload, while the negative sign of the queue size indicates the storage of the communication potential, i.e.,
\begin{equation}
B(t+1) = B(t) + X(t+1),
\end{equation}
where 
\begin{equation}
X(t+1) = a(t+1) - [Q(t+1)]^{+},
\end{equation}
which follows that the quantum capacity can be negative \cite{horodecki2005partial}, if $Q(t+1) \ge 0$, then $X(t+1)=a(t+1)-Q(t+1)$, and if $Q(t+1)<0$, then $X(t+1)=a(t+1)-\bigvee\{ Q(t+1),0 \}$.
By iteration, we obtain
\begin{equation}
B(t) = B(0) + \sum_{i=0}^{t}a(i) - \sum_{i=0}^{t}[Q(i)]^{+}. 
\end{equation}

Assume no loss, the cumulative output is the cumulative input minus the workload backlog, i.e.,
\begin{IEEEeqnarray}{rCl}
A^{\ast}(t) &=& A(t) - [B(t) - B(0)]^{+} \\
&=& A(t) - \left[ \sum_{i=0}^{t}a(i) - \sum_{i=0}^{t}[Q(i)]^{+} \right]^{+} \IEEEeqnarraynumspace\\
&=& \bigwedge \left\{ \sum_{i=0}^{t}[Q(i)]^{+}, A(t) \right\},
\end{IEEEeqnarray}
which indicates that the cumulative output equals the cumulative input if the backlog is negative, otherwise, it equals the cumulative capacity.
Based on the input-output relationship, we define the delay as
\begin{equation}
D(t) = \inf\left\{ d\ge{0}: A(t-d)\le A^\ast(t) \right\},
\end{equation}
which is the virtual delay that a hypothetical arrival has experienced on departure.

\begin{lemma}
The distribution of delay is expressed as
\begin{IEEEeqnarray}{rCl}
\Pr ( D(t)\le d) 
= \Pr\qty{ {  A(t-d) - Q^{+}(t) \le 0 }},
\end{IEEEeqnarray}
where $Q^{+}(t) = \sum_{i=0}^{t}[Q(i)]^{+}$.
\end{lemma}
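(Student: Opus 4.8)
The plan is to unfold the definition of $D(t)$ and collapse the event $\{D(t)\le d\}$ to a single inequality by exploiting the monotonicity of the cumulative arrival process. First I would note that $A(s)=A(0,s)=\sum_{i=0}^{s}a(i)$ is non-decreasing in $s$, so that $A(t-d')$ is non-increasing in $d'$. Hence the set $\{d'\ge 0 : A(t-d')\le A^{\ast}(t)\}$ appearing in the definition of $D(t)$ is an upper set: once an index qualifies, every larger index qualifies too. In the slotted model $d'$ ranges over the non-negative integers, so this upper set has a least element and its infimum is attained; consequently $D(t)\le d$ holds if and only if $d$ itself satisfies the defining inequality, namely $A(t-d)\le A^{\ast}(t)$.

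Next I would substitute the closed form $A^{\ast}(t)=\bigwedge\{Q^{+}(t),A(t)\}$ established just above, with $Q^{+}(t)=\sum_{i=0}^{t}[Q(i)]^{+}$. The inequality $A(t-d)\le A^{\ast}(t)$ is then equivalent to the conjunction of $A(t-d)\le Q^{+}(t)$ and $A(t-d)\le A(t)$. The second of these is automatic, since $d\ge 0$ gives $t-d\le t$ and $A$ is non-decreasing, whence $A(t-d)\le A(t)$. Therefore the minimum constraint reduces to its first component, $A(t-d)\le Q^{+}(t)$, i.e. $A(t-d)-Q^{+}(t)\le 0$. Combining this with the equivalence from the first paragraph and taking probabilities of both sides yields the claimed identity $\Pr(D(t)\le d)=\Pr\{A(t-d)-Q^{+}(t)\le 0\}$.

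I expect the only delicate point to be the equivalence $\{D(t)\le d\}=\{A(t-d)\le A^{\ast}(t)\}$, which rests on the infimum in the definition of $D(t)$ being attained. This is immediate in the discrete-time setting used here, because the qualifying set is an upper set of integers and therefore has a minimum; in a continuous-time formulation one would instead invoke right-continuity of $A$ to guarantee attainment. The remaining reduction is purely the algebra of the minimum and poses no further obstacle.
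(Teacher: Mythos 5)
Your proposal is correct and follows essentially the same route as the paper: both reduce $\{D(t)\le d\}$ to $\{A(t-d)\le A^{\ast}(t)\}$ and then use the trivial inequality $A(t-d)\le A(t)$ to collapse the minimum $A^{\ast}(t)=\bigwedge\{Q^{+}(t),A(t)\}$ to its first component (the paper phrases this as a case analysis on the complementary event via $\bigvee$, you via the conjunction for $\le$). Your explicit justification that the infimum in the definition of $D(t)$ is attained, via monotonicity of $A$ and the upper-set structure of the qualifying indices, is a small point the paper's proof takes for granted.
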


\begin{proof}
Considering $\Pr ( D(t)\le d) = \Pr\qty{ A(t-d) \le A^\ast(t) } $ and $A(t-d)-A(t)\le0$,  if $A(t-d) - Q^{+}(t)\le 0$, then $\Pr\qty{ \bigvee \{ A(t-d) - Q^{+}(t), A(t-d) - A(t) \} > 0 } = \Pr\qty{  A(t-d) - Q^{+}(t) >0}=0$; 
if $A(t-d) - Q^{+}(t)> 0$, then $\Pr\qty{ \bigvee \{ A(t-d) - Q^{+}(t), A(t-d) - A(t) \} > 0 } = \Pr\qty{  A(t-d) - Q^{+}(t) >0} =1$. 
This completes the proof.
\end{proof}

We study the stability of the queue and investigate the impact of negative drift and positive drift of the queue increment process on the extreme behavior of the performance measures.
We define $\lim_{t\rightarrow\infty} \Pr(f(t) \le x) := \Pr(\lim_{t\rightarrow\infty} f(t) \le x)$ as the distribution of $f(t)$ at $t=\infty$.
We present the proof in Appendix \ref{quantum-limit-behavior-proof}.

\begin{theorem}\label{quantum-limit-behavior}
The probability of zero delay equals the probability of empty workload or non-empty communication potential, i.e.,
\begin{equation}
\Pr(D(t) = 0) = \Pr(B(t)\le 0).
\end{equation}
Let $\lambda$ and $Q$ be respectively the steady state mean rate of the arrival process and capacity process.
Then,
\begin{IEEEeqnarray}{rCl}
0 < \lim_{t\rightarrow\infty} \Pr \qty{ D(t)= 0 | \lambda = Q }  \le 1,
\end{IEEEeqnarray}
where the equality holds when the arrival process and capacity process are both constant.

If
\begin{IEEEeqnarray}{rCl}
\Pr\qty { \sup_{t\ge 0} \qty{ A(t) - Q^{+}(t) } \uparrow +\infty \Big| \lambda > Q } &=& 1,
\end{IEEEeqnarray}
then
\begin{IEEEeqnarray}{rCl}
\lim_{t\rightarrow\infty} \Pr \qty{ D(t)= 0 | \lambda > Q } &=& 0.
\end{IEEEeqnarray}
If
\begin{IEEEeqnarray}{rCl}
\Pr\qty { \inf_{t\ge 0} \qty{ A(t) - Q^{+}(t) } \downarrow -\infty \Big| \lambda < Q } &=& 1,
\end{IEEEeqnarray}
then
\begin{IEEEeqnarray}{rCl}
\lim_{t\rightarrow\infty} \Pr \qty{ D(t)= 0 | \lambda < Q } &=& 1.
\end{IEEEeqnarray}
\end{theorem}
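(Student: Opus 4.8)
The plan is to reduce every assertion to the sign behavior of the workload walk $B(t) = \sum_{i=0}^{t} X(i)$ with increments $X(i) = a(i) - [Q(i)]^{+}$ started at $B(0)=0$, for which the queueing principle already supplies $B(t) = A(t) - Q^{+}(t)$. For the first claim I would instantiate the preceding delay lemma at $d=0$: since $D(t)\ge 0$ we have $\{D(t)=0\}=\{D(t)\le 0\}$, so the lemma gives $\Pr(D(t)=0) = \Pr\{A(t) - Q^{+}(t)\le 0\} = \Pr(B(t)\le 0)$, which is the stated identity. Everything that follows then becomes a statement about $\lim_{t\to\infty}\Pr(B(t)\le 0)$, and the whole theorem hinges on the drift $\mathbb{E}[X] = \lambda - Q$ of the increments.

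For the two non-critical regimes I would invoke the strong law of large numbers (or the ergodic theorem) for the stationary increments, giving $B(t)/t \to \lambda - Q$ almost surely. When $\lambda > Q$ the drift is positive, so $B(t)\to +\infty$ a.s., which is precisely the transience encoded by the hypothesis $\sup_{t\ge 0}\{A(t)-Q^{+}(t)\}\uparrow +\infty$; consequently $\mathbf{1}\{B(t)\le 0\}\to 0$ a.s., and bounded convergence yields $\Pr(D(t)=0)=\Pr(B(t)\le 0)\to 0$. Symmetrically, when $\lambda < Q$ the drift is negative and the hypothesis $\inf_{t\ge 0}\{A(t)-Q^{+}(t)\}\downarrow -\infty$ is in force, so $B(t)\to -\infty$ a.s., $\mathbf{1}\{B(t)\le 0\}\to 1$ a.s., and $\Pr(D(t)=0)\to 1$.

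The balanced case $\lambda = Q$ is the crux. The bound $\le 1$ is trivial, and equality is attained in the degenerate setting: if $a(t)\equiv\lambda$ and $Q(t)\equiv Q=\lambda$ then $X(t)\equiv 0$, $B(t)\equiv 0$, and $\Pr(B(t)\le 0)=1$ for every $t$. The strict positivity of the limit in the non-degenerate case is the step I expect to be the main obstacle, because a zero-drift walk is null-recurrent and $\Pr(B(t)\le 0)$ need not even converge for arbitrary stationary increments. Under a finite increment variance $\sigma^{2}>0$ I would close it with the central limit scaling $B(t)/(\sigma\sqrt{t})\Rightarrow N(0,1)$, whence $\Pr(B(t)\le 0)\to\Phi(0)=\tfrac{1}{2}\in(0,1]$; absent a central limit theorem, I would instead fall back on Chung--Fuchs recurrence, $\liminf_{t} B(t) = -\infty$ a.s., to argue that $\{B(t)\le 0\}$ recurs and hence that the zero-delay probability stays bounded away from $0$, delivering $0<\lim_{t\to\infty}\Pr(D(t)=0)\le 1$.
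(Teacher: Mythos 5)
Your proposal is correct and follows essentially the same route as the paper: reduce $\Pr(D(t)=0)$ to $\Pr(B(t)\le 0)$ via the delay lemma at $d=0$, then split on the sign of the drift $\lambda-Q$, using the divergence hypotheses in the non-critical regimes. If anything, your treatment of the critical case $\lambda=Q$ is more careful than the paper's, which merely asserts that $B(t)$ ``oscillates around zero'' to conclude the probability lies in $(0,1)$, whereas you correctly flag that strict positivity needs an additional ingredient (a CLT under finite variance, or Chung--Fuchs recurrence) and supply the degenerate constant case for the equality claim.
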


\begin{remark}
It is well known that the conditions $\Pr\{ \sup_{t\ge 0} \{ A(t) - Q^{+}(t) \} \uparrow +\infty | \lambda > Q \} = 1$ and 
$\Pr\{ \inf_{t\ge 0} \{ A(t) - Q^{+}(t) \} \downarrow -\infty | \lambda < Q \} = 1$ hold for random walk \cite{rolski1998stochastic}.
This stability condition indicates that, if the communication potential of the quantum channel is stored for future use, it may cause the queue to overflow in case the potential is neither sufficiently consumed nor dropped.
\end{remark}

We study the temporal behavior and the mean value of the performance measures.
We define $\lim_{t\rightarrow\infty} \mathbb{E}[f(t)] := \mathbb{E}[\lim_{t\rightarrow\infty} f(t)]$ as the mean value of $f(t)$ at $t=\infty$. 
We present the proof in Appendix \ref{limit-behavior-stationary-proof}

\begin{theorem}\label{limit-behavior-stationary}
Consider stationary quantum capacity process $Q(t)$ with mean rate ${Q}$ and stationary quantum arrival process $a(t)$ with mean rate $\lambda$. The mean of the transient backlog at steady state is expressed as
\begin{equation}
\lim_{t\rightarrow\infty} \mathbb{E} \qty[ \frac{B(t)}{t} ] = \lambda - {Q},
\end{equation}
and the mean of the transient throughput at steady state is expressed as
\begin{equation}
\lim_{t\rightarrow\infty} \mathbb{E} \qty[ \frac{A^\ast(t)}{t} ] = \bigwedge (\lambda,{Q}).
\end{equation}
Consider continuous time, the mean of delay is expressed as
\begin{equation}
\mathbb{E}\qty[ D(t) ] = \int_{d=0}^{t} d \dd{ \Pr\qty{ A(t-d) - Q^{+}(t) \le 0} },
\end{equation}
where $Q^{+}(t) = \sum_{i=0}^{t} [Q(t)]^{+}$,
and the steady state mean is 
\begin{IEEEeqnarray}{rCl}
\lim_{t\rightarrow\infty} \mathbb{E}\qty[ D(t) | \lambda \le Q ] &=& 0, \\
\lim_{t\rightarrow\infty} \mathbb{E}\qty[ D(t) | \lambda > Q ] &\ge& 0.
\end{IEEEeqnarray}
\end{theorem}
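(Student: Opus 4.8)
The plan is to dispatch the four assertions in turn, using the iterated backlog representation $B(t) = B(0) + \sum_{i=0}^{t}a(i) - \sum_{i=0}^{t}[Q(i)]^{+}$, the closed form $A^{\ast}(t) = \bigwedge\{\sum_{i=0}^{t}[Q(i)]^{+}, A(t)\}$, and the delay distribution from the preceding lemma, while invoking stationarity (and, where almost-sure limits are needed, ergodicity) of $a(t)$ and $[Q(t)]^{+}$ together with the stated convention $\lim_{t\to\infty}\mathbb{E}[f(t)] := \mathbb{E}[\lim_{t\to\infty}f(t)]$. For the backlog I would divide the iterated formula by $t$ and take expectations: by stationarity each $\mathbb{E}[a(i)]=\lambda$ and each $\mathbb{E}[[Q(i)]^{+}]$ equals a common constant, which I identify with the effective capacity rate $Q$, so linearity gives $\mathbb{E}[B(t)/t] = \mathbb{E}[B(0)]/t + \tfrac{t+1}{t}(\lambda - Q)$, whence $\lambda - Q$ in the limit. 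This step needs only stationarity and linearity, not ergodicity.

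For the throughput I would divide the closed form by $t$, writing $A^{\ast}(t)/t = \bigwedge\{t^{-1}\sum_{i=0}^{t}[Q(i)]^{+}, t^{-1}\sum_{i=0}^{t}a(i)\}$, and invoke the strong law for stationary ergodic sequences to obtain $t^{-1}\sum a(i)\to\lambda$ and $t^{-1}\sum [Q(i)]^{+}\to Q$ almost surely. Since $\bigwedge$ is continuous, $A^{\ast}(t)/t \to \bigwedge(\lambda, Q)$ a.s., and the convention identifies $\lim_{t\to\infty}\mathbb{E}[A^{\ast}(t)/t]$ with the deterministic limit $\bigwedge(\lambda,Q)$.

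For the delay the representation is immediate: since the lemma gives $\Pr(D(t)\le d) = \Pr\{A(t-d) - Q^{+}(t)\le 0\}$ as the CDF of $D(t)$ and $D(t)$ is supported on $[0,t]$ (the constraint $t-d\ge 0$), the Stieltjes form $\mathbb{E}[D(t)] = \int_{0}^{t} d\, d\Pr\{A(t-d)-Q^{+}(t)\le 0\}$ follows directly. For the steady-state means I would split on the drift. When $\lambda < Q$ the increment has strictly negative drift, so combining $B(t)/t\to\lambda-Q<0$ with the assumed $\inf_{t}\{A(t)-Q^{+}(t)\}\downarrow-\infty$ yields $B(t)\to-\infty$ a.s.; by Theorem \ref{quantum-limit-behavior} (via $\{D(t)=0\}=\{B(t)\le 0\}$) this forces $D(t)=0$ for all large $t$, so $\lim_{t} D(t)=0$ a.s. and the mean is $0$ by the convention. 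When $\lambda > Q$, nonnegativity of delay gives $\mathbb{E}[D(t)]\ge 0$ trivially, so the limit is $\ge 0$.

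The hard part will be the boundary case $\lambda = Q$ subsumed in the first delay claim. There the drift vanishes and $B(t)$ behaves like a zero-drift, oscillatory random walk with $\limsup_{t} B(t)=+\infty$ and $\liminf_{t} B(t)=-\infty$, so both $\{B(t)\le 0\}$ and its complement recur infinitely often and $D(t)$ does not converge pathwise to $0$; indeed $\Pr(D(t)=0)=\Pr(B(t)\le 0)$ need not tend to $1$. Establishing $\lim_{t\to\infty}\mathbb{E}[D(t)\mid\lambda=Q]=0$ therefore cannot reuse the almost-sure argument valid for $\lambda<Q$, and instead requires controlling the delay excursions, e.g. showing $D(t)=o(t)$ with suitable uniform integrability, or reading the convention with $\lim_{t} D(t)$ in a time-averaged sense. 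I would isolate this critical regime and treat it as the principal obstacle, the strict-inequality cases following cleanly from the drift dichotomy above.
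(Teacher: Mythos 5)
Your treatment of the backlog (linearity of expectation applied to the iterated sum), the throughput (ergodic limits plus continuity of $\bigwedge$ and the paper's limit-exchange convention), and the Stieltjes representation of $\mathbb{E}[D(t)]$ all line up with the paper, and your $\lambda>Q$ case is the same trivial nonnegativity observation. The genuine gap is the boundary case $\lambda=Q$, which you explicitly leave unresolved: the theorem asserts $\lim_{t\to\infty}\mathbb{E}[D(t)\mid\lambda\le Q]=0$ with a weak inequality, so $\lambda=Q$ is part of the claim and cannot be deferred as an ``obstacle.'' Your difficulty stems from attacking it pathwise (oscillation of the zero-drift walk, recurrence of $\{B(t)\le 0\}$ and its complement), which is indeed a dead end; the paper never argues pathwise convergence of $D(t)$ at all.

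The paper's route, which covers $\lambda<Q$ and $\lambda=Q$ uniformly, is to integrate by parts: with $f(d,t):=\Pr\{A(t-d)-Q^{+}(t)\le 0\}$ nondecreasing in $d$ and $f(t,t)=1$, one gets
\begin{equation}
\mathbb{E}[D(t)]=t-\int_{0}^{t}f(d,t)\,\dd{d},
\qquad
0\le \mathbb{E}[D(t)]\le t\bigl(1-f(0,t)\bigr),
\end{equation}
and then observes that $f(0,t)=\Pr\{A(t)-Q^{+}(t)\le 0\}\to\Pr\{\lim_{t\to\infty}(A(t)-Q^{+}(t))/t\le 0\}=1$ whenever $\lambda\le Q$, because the normalized limit equals $\lambda-Q\le 0$ and the defining event is a \emph{weak} inequality --- this is exactly what rescues the critical regime $\lambda=Q$, where the limit is $0\le 0$. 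So the missing idea is to work with the distribution function $f(0,t)$ rather than with almost-sure behavior of $B(t)$. (You could fairly note that the paper's final squeeze still needs $t(1-f(0,t))\to 0$, i.e., a rate statement that the monotonicity bound alone does not supply; but identifying and using the $f(0,t)\to 1$ mechanism is the step your proposal lacks, and without it the stated theorem is not proved.)
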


\begin{remark}
An interesting result is that the average delay does not equal the average backlog divided by the average arrival rate for $\lambda<Q$, i.e., 
\begin{equation}
\lim_{t\rightarrow\infty} \mathbb{E} \qty[ {D(t)} ] \neq \lim_{t\rightarrow\infty} \frac { \mathbb{E} \qty[ {B(t)} ] } { \mathbb{E} \qty[ {A(t)}/{t} ] },
\end{equation}
and the equality holds when $\lambda=Q$ and both sides equal zero.
An explanation is that, the backlog can be negative for $\lambda<Q$ while the delay is non-negative.
If $\lambda=Q$ is treated as the stability condition of the queue, this result corresponds to the Little's law in the classical queue model \cite{asmussen2003applied}.
In addition, it is interesting to investigate the relationship between the mean delay and mean backlog for $\lambda>Q$.
\end{remark}

\begin{remark}
It is interesting to investigate the conditional events $\Pr(B(t)\le x | B(t)\ge 0) = \Pr(0\le B(t)\le x)/\Pr(B(t)\ge 0)$ and $\Pr(B(t)\ge -x | B(t)\le 0) = \Pr(-x\le B(t)\le 0)/\Pr(B(t)\le 0)$, $\forall x\ge0$, which represent respectively the distributions of the workload and the communication potential on their own stage, the associated mean values $\mathbb{E}[B(t) | B(t)\ge 0]$ and $\mathbb{E}[B(t) | B(t)\le 0]$, and their relationships with delay $\mathbb{E}[D(t)]$. 
\end{remark}

We presents the statistical tail probabilities of the performance measures in the following theorem. We assume $B(0)=0$, i.e., the queue is empty at the beginning.
We present the proof in Appendix \ref{quantum-queue-theorem-general}.

\begin{theorem}\label{general-quantum-theorem}
Consider quantum information transmission. 
The tail of backlog is bounded by, for $x\in\mathbb{R}$,
\begin{IEEEeqnarray}{rCl}
\IEEEeqnarraymulticol{3}{l}{
\Pr (B(t)>x) 
= \Pr\qty{ \sum_{i=0}^{t}a(i) - \sum_{i=0}^{t}[Q(i)]^{+} >x } 
}\IEEEeqnarraynumspace\\
&\le& \mathbb{E}\qty[ e^{ \theta \qty( \sum_{i=0}^{t}a(i) - \sum_{i=0}^{t}[Q(i)]^{+} ) } ] \cdot e^{-\theta x},
\end{IEEEeqnarray}
The tail of throughput is bounded by
\begin{IEEEeqnarray}{rCl}
\IEEEeqnarraymulticol{3}{l}{
\Pr \qty( A^{\ast}(t) >x ) 
= \Pr \qty{ \bigwedge \left\{ \sum_{i=0}^{t}[Q(i)]^{+}, A(t) \right\} >x } 
}\IEEEeqnarraynumspace\\
&\le& \mathbb{E}\qty[ e^{ \theta \qty( \sum_{i=0}^{t}[Q(i)]^{+} + A(t) ) } ] \cdot e^{-\theta 2x},
\end{IEEEeqnarray}
and the tail of delay is bounded by
\begin{IEEEeqnarray}{rCl}
\IEEEeqnarraymulticol{3}{l}{
\Pr ( D(t)>d) = \Pr\qty{ A(t-d) > A^\ast(t) } 
} \\
&=& \Pr\qty{ {  A(t-d) - \sum_{i=0}^{t}[Q(i)]^{+} } > 0 } \\
&\le& \mathbb{E}\qty[ e^{ \theta \qty( A(t) - \sum_{i=0}^{t}[Q(i)]^{+} ) p } ]^{1/p} \mathbb{E}\qty[e^{ -\theta A(t-d,t) q}]^{1/q}, \IEEEeqnarraynumspace
\end{IEEEeqnarray}
where $p$ and $q$ are positive with $1/p+1/q=1$.
\end{theorem}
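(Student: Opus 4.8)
The plan is to establish the three bounds independently, in each case beginning from the exact pathwise characterization of the performance measure supplied by the queueing principle (and, for the delay, by the preceding lemma), and then applying a Chernoff-type estimate. For the backlog, the iterated recursion already gives $B(t) = \sum_{i=0}^{t} a(i) - \sum_{i=0}^{t}[Q(i)]^{+}$ under the standing assumption $B(0)=0$, which is precisely the claimed equality. I would then apply Markov's inequality to the nonnegative variable $e^{\theta B(t)}$ for $\theta>0$, namely $\Pr(B(t)>x)=\Pr(e^{\theta B(t)}>e^{\theta x})\le \mathbb{E}[e^{\theta B(t)}]e^{-\theta x}$, which is exactly the stated bound. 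Here $x$ may be any real number, since $B(t)$ itself is allowed to take negative values when the communication potential dominates the workload.

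For the throughput, recall from the queueing principle that $A^{\ast}(t)=\min\{\,\sum_{i=0}^{t}[Q(i)]^{+},\,A(t)\,\}$. The key observation is that $\{\min(X,Y)>x\}=\{X>x\}\cap\{Y>x\}$, and that $X>x$ together with $Y>x$ forces $X+Y>2x$; hence $\{\min(X,Y)>x\}\subseteq\{X+Y>2x\}$. Applying a single Chernoff bound to $X+Y=\sum_{i=0}^{t}[Q(i)]^{+}+A(t)$ at the level $2x$ then produces the factor $e^{-\theta 2x}$ together with the stated moment generating function. This step is deliberately loose, as the set inclusion trades the minimum for a sum.

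For the delay, I would invoke the delay lemma, which yields $\Pr(D(t)>d)=\Pr\{A(t-d)-Q^{+}(t)>0\}$ with $Q^{+}(t)=\sum_{i=0}^{t}[Q(i)]^{+}$. Writing $A(t-d)=A(t)-A(t-d,t)$, where $A(t-d,t)$ denotes the arrival increment over the final $d$ slots, the event becomes $\{A(t)-Q^{+}(t)-A(t-d,t)>0\}$. A Chernoff bound gives the single expectation $\mathbb{E}[e^{\theta(A(t)-Q^{+}(t))-\theta A(t-d,t)}]$, and I would then split it into the product of the two marginal moment generating functions by H\"{o}lder's inequality with exponents $p,q$ satisfying $1/p+1/q=1$, reproducing the claimed bound.

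The routine ingredient throughout is the Chernoff bound; the only nontrivial maneuvers are the set-inclusion argument for the throughput, which introduces the factor of two in the exponent, and the H\"{o}lder split for the delay, which is forced precisely because the aggregate $A(t)-Q^{+}(t)$ and the arrival increment $A(t-d,t)$ share the arrival process and therefore need not be independent, so their joint moment generating function does not factorize directly. The implicit regularity requirement I would carry along is that $\theta>0$ is chosen in the region where all the relevant moment generating functions are finite.
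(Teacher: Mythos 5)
Your proposal is correct and follows essentially the same route as the paper's proof: a Chernoff bound for the backlog, the observation that $\bigwedge(X,Y)>x$ forces $X+Y>2x$ (which the paper phrases equivalently as $\bigwedge(X,Y)\le(X+Y)/2$) followed by a Chernoff bound for the throughput, and a Chernoff bound followed by H\"{o}lder's inequality applied to the rewritten event $A(t)-Q^{+}(t)-A(t-d,t)>0$ for the delay. No gaps.
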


\begin{remark}
Taking advantage of the Chernoff bound $\Pr(X \le x) \le \mathbb{E}[e^{-\theta X}]e^{\theta{x}}$, $\theta>0$ and the fact $\Pr(X>x)=1-\Pr(X\le x)$, it is easy to obtain upper bounds of the distributions or lower bounds of the tails of the performance measures $\Pr(B(t)\le x)$, $\Pr(A^\ast\le x)$, and $\Pr(D(t)\le d)$.
Specifically, according to $\Pr(\bigwedge(X,Y)\le z) = \Pr(X\le z) + \Pr(Y\le z) - \Pr(X\le z, Y\le z)$, $\Pr ( A^{\ast}(t) \le x ) = \Pr \{ \bigwedge \{ \sum_{i=0}^{t}[Q(i)]^{+}, A(t) \} \le x \} \le \Pr \{ \sum_{i=0}^{t}[Q(i)]^{+} \le x \} + \Pr \{ A(t) \le x \}$.
\end{remark}

\subsubsection{Distribution Refinement}

We consider the temporal independence in the capacity process and in the arrival process and provide the performance results as follows. 
We present the proof in Appendix \ref{iid-theorem-proof}. 

\begin{theorem}[I.I.D. Process]\label{iid-theorem}
Consider a quantum channel with independently and identically distributed arrival process $a(t)\overset{\mathrm{d}}{=}a$ and i.i.d. quantum capacity $[Q(t)]^{+}\overset{\mathrm{d}}{=}Q$.
The distribution of backlog is bounded by, for some $\theta>0$,
\begin{equation}
1 - e^{t\kappa(\theta)-\theta{x}} \le \Pr({B(t)} \le x) \le e^{t\kappa(-\theta)+\theta{x}},
\end{equation}
where $\kappa(\pm\theta) = \log\mathbb{E}\left[e^{\pm\theta({a-Q})}\right]$ is the cumulant generating function of the queue increment process.
The distribution of delay is bounded by
\begin{equation}
1 - e^{t\kappa^{Q}(-\theta) + (t-d)\kappa^{A}(\theta) } \le \Pr(D(t) \le d) \le e^{t\kappa^{Q}(\theta) + (t-d)\kappa^{A}(-\theta) },
\end{equation}
where $\theta>0$, $\kappa^{A}(\pm\theta) = \log\mathbb{E}\left[e^{\pm\theta({a})}\right]$, and $\kappa^{Q}(\pm\theta) = \log\mathbb{E}\left[e^{\pm\theta({Q})}\right]$.
The distribution of throughput is bounded by
\begin{multline}
2- e^{t\kappa^{Q}(\theta) - \theta{x}} - e^{t\kappa^{A}(\theta) - \theta{x}} 
 - e^{t\kappa^{Q}(-\theta)+\theta{x}} \times e^{t\kappa^{A}(-\theta)+\theta{x}} \\
\le \Pr \qty( A^{\ast}(t) \le x ) \le 
e^{t\kappa^{Q}(-\theta) + \theta{x}} + e^{t\kappa^{A}(-\theta) + \theta{x}} \\
 - \qty( 1 - e^{t\kappa^{Q}(\theta)-\theta{x}} ) \times \qty( 1 - e^{t\kappa^{A}(\theta)-\theta{x}}),
\end{multline}
where $\theta>0$, $\kappa^{Q}(\pm\theta)$ and $\kappa^{A}(\pm\theta)$ correspond respectively to the cumulative generating function of $Q$ and $a$.
\end{theorem}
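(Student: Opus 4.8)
The plan is to reduce every one of the three two--sided bounds to the elementary exponential (Chernoff) inequalities already recorded in Theorem~\ref{general-quantum-theorem} and its following remark, and then to exploit the i.i.d. structure to replace each moment generating function by a power, i.e. by an exponential of the cumulant generating function. Throughout I write $B(t)=\sum_{i}(a(i)-[Q(i)]^{+})$, $Q^{+}(t)=\sum_{i}[Q(i)]^{+}$, and $A^{\ast}(t)=\bigwedge\{Q^{+}(t),A(t)\}$ from the queueing principle, and I use repeatedly the two facts $\Pr(X\le x)\le\mathbb{E}[e^{-\theta X}]e^{\theta x}$ and $\Pr(X>x)\le\mathbb{E}[e^{\theta X}]e^{-\theta x}$ for $\theta>0$, together with the identity $\Pr(X\le x)=1-\Pr(X>x)$.

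For the backlog, independence across time slots gives the factorization $\mathbb{E}[e^{\pm\theta B(t)}]=\prod_{i}\mathbb{E}[e^{\pm\theta(a(i)-[Q(i)]^{+})}]=e^{t\kappa(\pm\theta)}$, where $\kappa(\pm\theta)=\log\mathbb{E}[e^{\pm\theta(a-Q)}]$. The upper bound on $\Pr(B(t)\le x)$ is then the reflected Chernoff inequality applied to $B(t)$, while the lower bound follows by writing $\Pr(B(t)\le x)=1-\Pr(B(t)>x)$ and inserting the tail bound of Theorem~\ref{general-quantum-theorem}. For the delay I start from the delay identity established above, namely $\Pr(D(t)\le d)=\Pr\{A(t-d)-Q^{+}(t)\le 0\}$. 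Since the two sums now run over different horizons, I factor the generating function across the two independent processes, $\mathbb{E}[e^{-\theta(A(t-d)-Q^{+}(t))}]=\mathbb{E}[e^{-\theta A(t-d)}]\,\mathbb{E}[e^{\theta Q^{+}(t)}]=e^{(t-d)\kappa^{A}(-\theta)}e^{t\kappa^{Q}(\theta)}$; the upper bound on $\Pr(D(t)\le d)$ is immediate from $\Pr(Z\le 0)\le\mathbb{E}[e^{-\theta Z}]$, and the lower bound comes from bounding the complementary event $\Pr(A(t-d)-Q^{+}(t)>0)$.

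The throughput is the only step requiring genuine care. The starting point is the inclusion--exclusion identity $\Pr(\bigwedge(X,Y)\le x)=\Pr(X\le x)+\Pr(Y\le x)-\Pr(X\le x,Y\le x)$ quoted in the remark following Theorem~\ref{general-quantum-theorem}, applied to $X=Q^{+}(t)$ and $Y=A(t)$; independence of the capacity and arrival processes replaces the joint probability by the product $\Pr(Q^{+}(t)\le x)\,\Pr(A(t)\le x)$. The subtlety is that this product enters with a minus sign, so to produce an \emph{upper} bound on $\Pr(A^{\ast}(t)\le x)$ I must upper-bound the two single-variable probabilities (reflected Chernoff, giving $e^{t\kappa^{Q}(-\theta)+\theta x}$ and $e^{t\kappa^{A}(-\theta)+\theta x}$) while simultaneously \emph{lower}-bounding the product through $\Pr(\,\cdot\le x)\ge 1-e^{t\kappa(\theta)-\theta x}$; for the lower bound on $\Pr(A^{\ast}(t)\le x)$ the roles are reversed. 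Assembling the four resulting pieces in each direction reproduces the stated bound exactly.

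The main obstacle is precisely this bookkeeping of bound directions in the throughput step: the expression mixes two terms that must be over-estimated with a product term that must be under-estimated (and conversely for the other inequality), so one has to apply the two opposite Chernoff estimates to the same pair of probabilities and keep the signs straight. Everything else is routine; the only cosmetic point is the off-by-one in the number of summands ($t+1$ terms against the exponent $t\kappa$), which I absorb into the normalization of the cumulant generating function.
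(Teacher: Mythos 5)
Your proposal is correct and follows essentially the same route as the paper's proof: the paper phrases the backlog and delay bounds through a mean-one likelihood-ratio martingale $L(t)=e^{\theta B(t)-t\kappa(\theta)}$ and tilted expectations, but at fixed $t$ this reduces to exactly the two-sided Chernoff estimates with the i.i.d.\ factorization $\mathbb{E}[e^{\pm\theta B(t)}]=e^{t\kappa(\pm\theta)}$ that you use, and your throughput argument (inclusion--exclusion for the minimum, factoring the joint probability by independence, and bounding the two marginal terms and the product term in opposite directions) is precisely the paper's. The implicit assumptions you rely on --- independence between the arrival and capacity processes, and the $t$ versus $t+1$ indexing of the summands --- are shared with, and glossed over identically in, the paper's own proof.
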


\begin{figure*}[!t]
\centering
\subfloat[Backlog, $\lambda = Q$.]{\includegraphics[width=3.5in]{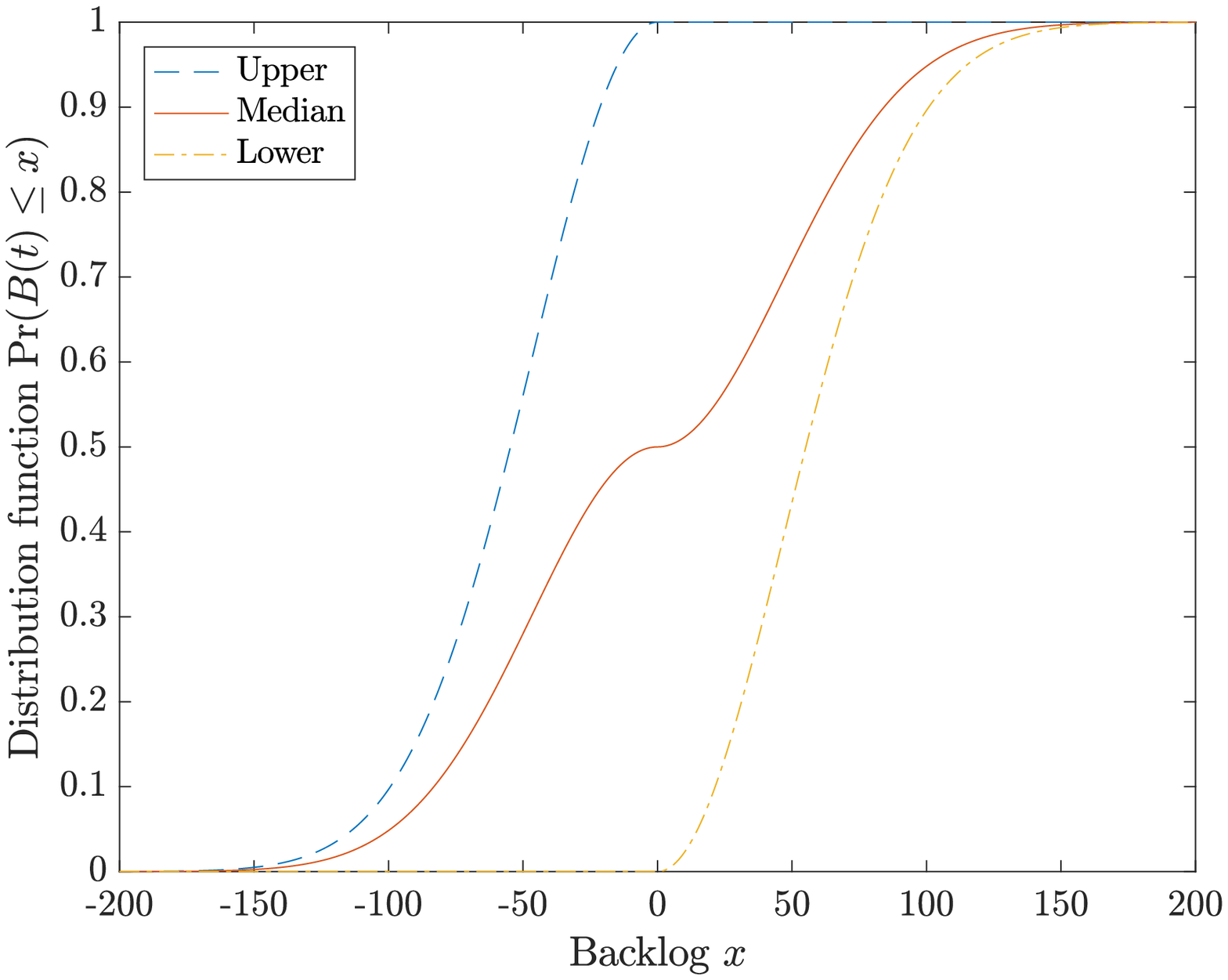}
\label{fig_first_case}}
\hfil
\subfloat[Delay, $\lambda = 10 Q$.]{\includegraphics[width=3.5in]{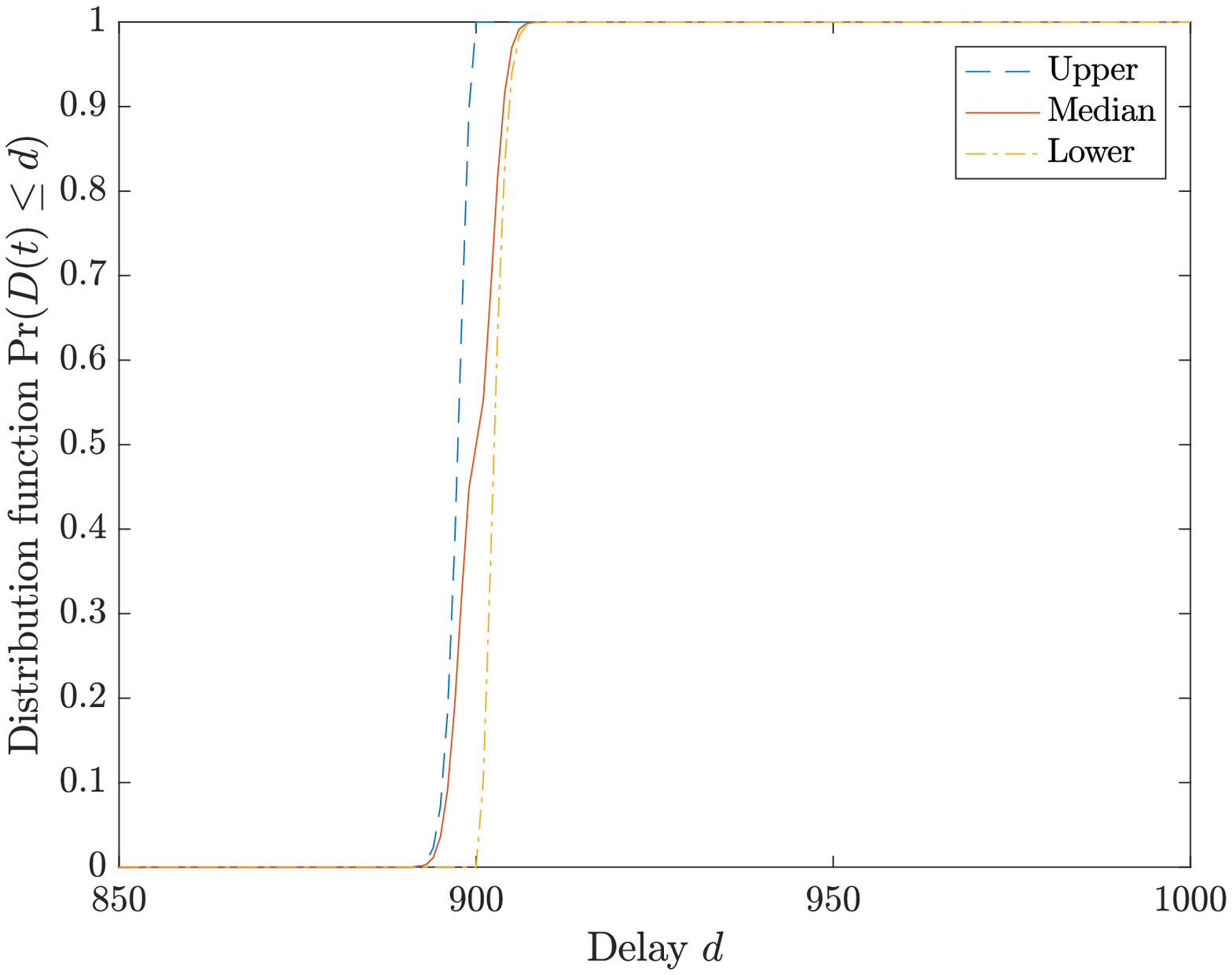}
\label{fig_second_case}}
\hfil
\subfloat[Transient throughput, $\lambda = \frac{1}{2} Q$.]{\includegraphics[width=3.5in]{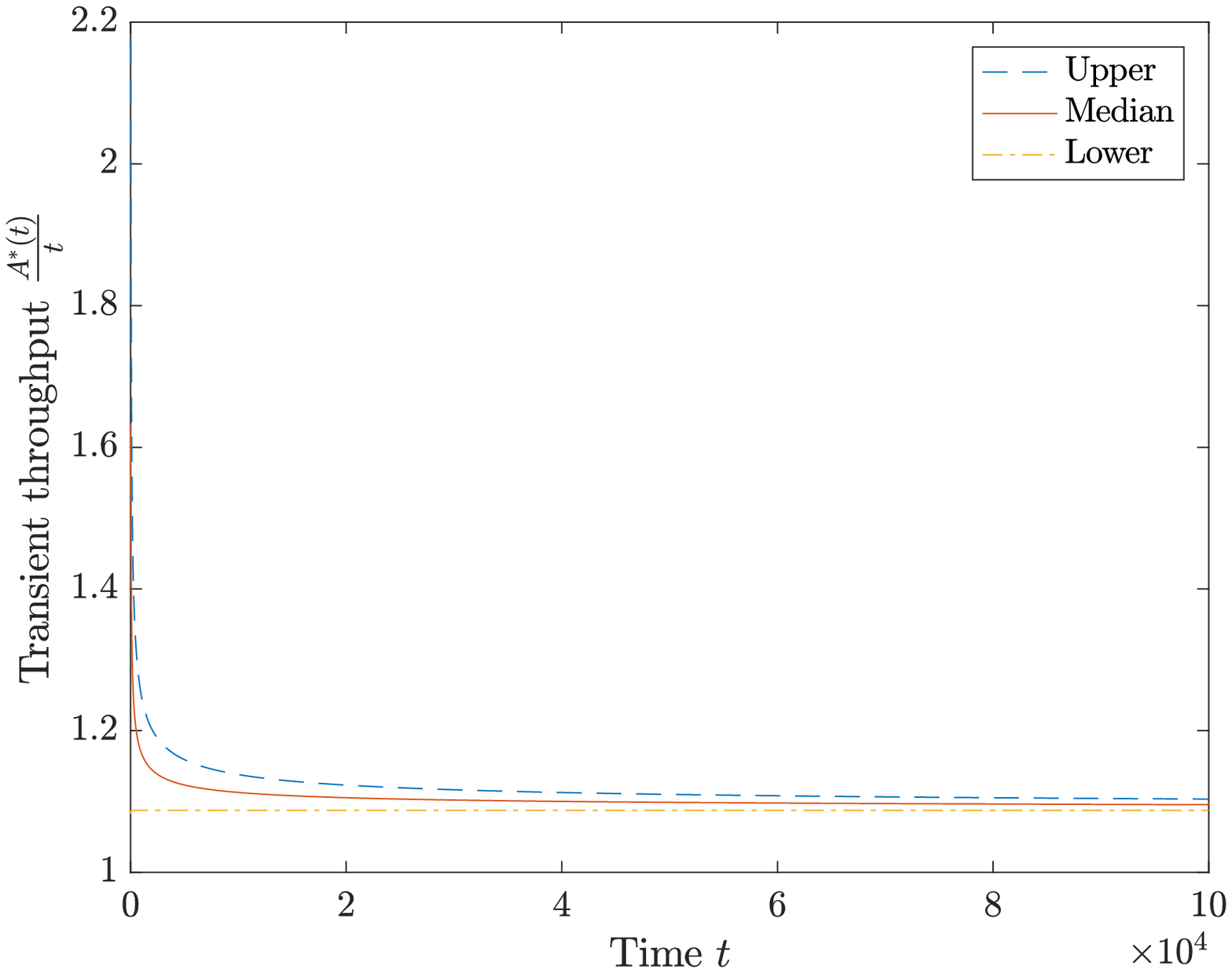}
\label{fig_third_case}}
\caption{Performance measures of quantum information transmission in bosonic Gaussian channel. Constant quantum capacity $Q=\log_{2}|\eta|-\log_{2}|1-\eta|$, where $\eta=e^{-l/l_a}$, i.i.d. quantum arrival with Poisson distribution $\mathbb{P}(n|\lambda)=\frac{\lambda^n}{n!}e^{-\lambda}$. 
$l=10$ and $l_a=50$.
$\lambda=Q$ for backlog, $\lambda=10Q$ for delay, both at time slot $t=10^3$. $\lambda=1/2 Q$ for throughput with violation probability $\Pr(A^\ast(t)/t > x)\le 10^{-5}$.
In addition to the upper and lower bounds, the median of the upper and lower bounds is illustrated.
Specifically, the upper and lower bounds imply $\mathbb{E}(D(t=10^3)) \approx 900$. 
}
\label{fig_quantum_independent}
\end{figure*}

We consider the Markov dependence, specifically, we use Markov additive process to model the cumulative arrival process and the cumulative capacity process.
We present the proof in Appendix \ref{map-theorem-proof}. 

\begin{theorem}[Markov Additive Process]\label{map-theorem}
Consider a quantum channel with Markov additive quantum capacity process $Q^{+}(t)=\sum_{i=0}^{t}[Q(t)]^{+}$ and Markov additive arrival process $A(t)$, and assume independence between the arrival process and capacity process.
Let $\kappa^A(\pm\theta)$ and $\bm{h}^A(\pm\theta)$, and $\kappa^Q(\pm\theta)$ and $\bm{h}^Q(\pm\theta)$, respectively correspond to
the logarithm of the Perron-Frobenius eigenvalue and the corresponding right eigenvector of the kernel for the Markov additive processes $A(t)$ and $Q^{+}(t)$.

Conditional on the initial states $\bm{i}=\bm{J}_0 \in \bm{E}$ of the capacity and arrival process.
The distribution of backlog is bounded by 
\begin{multline}
1- H_{-} \cdot e^{- \theta x + t\kappa^{A}(\theta) + t\kappa^{-Q}(\theta) } \le \\
\Pr_{\bm{i}}(B(t)\le x) \le 
H_{+} e^{ \theta x + t\kappa^{A}(-\theta) + t\kappa^{-Q}(-\theta) },
\end{multline}
and the distribution of delay is bounded by
\begin{multline}
1- H_{-} e^{ (t-d)\kappa^{A}(\theta) + t\kappa^{-Q}(\theta) } \le \\
\Pr_{\bm{i}}(D(t) \le d) \le H_{+} e^{ (t-d)\kappa^{A}(-\theta) + t\kappa^{-Q}(-\theta) },
\end{multline}
where $\theta>0$, $H_{-} = \frac{h_{J_0}^{A}{(\theta)}}{\min_{j\in E} h_{j}^{A}{(\theta)}} \frac{h_{J_0}^{-Q}{(\theta)}}{\min_{j\in E}h_{j}^{-Q}{(\theta)}}$ and $H_{+} = \frac{h_{J_0}^{A}{(-\theta)}}{\min_{j\in E} h_{j}^{A}{(-\theta)}} \frac{h_{J_0}^{-Q}{(-\theta)}}{\min_{j\in E}h_{j}^{-Q}{(-\theta)}}$.
The distribution of throughput is bounded by
\begin{multline}
 2 - \frac{h^{Q}_{J_0}(\theta)}{\min_{j\in E}h^{Q}_{j}(\theta)} e^{-\theta x +t\kappa^{Q}(\theta)}  
  - \frac{h^{A}_{J_0}(\theta)}{\min_{j\in E}h^{A}_{j}(\theta)} e^{-\theta x +t\kappa^{A}(\theta)}  \\
 - \frac{h^{Q}_{J_0}(-\theta) { e^{\theta x +t\kappa^{Q}(-\theta)} } }{\min_{j\in E}h^{Q}_{j}(-\theta)}  
 \times \frac{h^{A}_{J_0}(-\theta) { e^{\theta x +t\kappa^{A}(-\theta)} } }{\min_{j\in E}h^{A}_{j}(-\theta)} \\
\le \Pr_{\bm{i}}\qty(A^\ast(t) \le x)\le \\
 \frac{h^{Q}_{J_0}(-\theta)}{\min_{j\in E}h^{Q}_{j}(-\theta)} e^{\theta x +t\kappa^{Q}(-\theta)} 
 + \frac{h^{A}_{J_0}(-\theta)}{\min_{j\in E}h^{A}_{j}(-\theta)} e^{\theta x +t\kappa^{A}(-\theta)} \\
 - \qty( 1 - \frac{h^{Q}_{J_0}(\theta) { e^{-\theta x +t\kappa^{Q}(\theta)} } }{\min_{j\in E}h^{Q}_{j}(\theta)}  ) 
 \times \qty( 1 - \frac{h^{A}_{J_0}(\theta) { e^{-\theta x +t\kappa^{A}(\theta)} } }{\min_{j\in E}h^{A}_{j}(\theta)}  ).
\end{multline}
\end{theorem}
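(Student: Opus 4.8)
The plan is to derive all three bounds from the generic Chernoff estimates of Theorem~\ref{general-quantum-theorem}, the delay distribution established earlier, and a single spectral estimate for Markov additive processes. First I would record the two-sided use of the Chernoff bound: for $\theta>0$,
\begin{equation}
\Pr(Y>y)\le \mathbb{E}[e^{\theta Y}]\,e^{-\theta y}, \qquad \Pr(Y\le y)\le \mathbb{E}[e^{-\theta Y}]\,e^{\theta y},
\end{equation}
and convert an upper-tail estimate into a distributional lower bound through $\Pr(Y\le y)=1-\Pr(Y>y)$. Since $B(t)=A(t)-Q^{+}(t)$ and the arrival and capacity processes are assumed independent, every moment generating function that arises factors as $\mathbb{E}[e^{\pm\theta A(\cdot)}]\,\mathbb{E}[e^{\mp\theta Q^{+}(\cdot)}]$.

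The only analytic ingredient is the Perron--Frobenius bound for a Markov additive process $Y(t)$ (Appendix~\ref{markov-additive-process-description}): writing $\mathbb{E}_{i}[e^{\theta Y(t)}]=(\hat{F}[\theta]^{t}\bm{1})_{i}$ with kernel $\hat{F}[\theta]$, and combining the eigen-relation $\hat{F}[\theta]\bm{h}(\theta)=e^{\kappa(\theta)}\bm{h}(\theta)$ with the componentwise inequality $\bm{1}\le \bm{h}(\theta)/\min_{j\in E}h_{j}(\theta)$ and the nonnegativity of $\hat{F}[\theta]$, I obtain
\begin{equation}
\mathbb{E}_{i}[e^{\theta Y(t)}]\le \frac{h_{i}(\theta)}{\min_{j\in E}h_{j}(\theta)}\,e^{t\kappa(\theta)}.
\end{equation}
Applied to $A(t)$ and to $\mp Q^{+}(t)$ at parameters $\pm\theta$ and multiplied across the two independent factors, this reproduces exactly the constants $H_{\pm}$ and the exponents $t\kappa^{A}(\pm\theta)+t\kappa^{-Q}(\pm\theta)$.

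For the backlog, the distributional upper bound follows from $\Pr_{\bm{i}}(B(t)\le x)\le \mathbb{E}[e^{-\theta B(t)}]e^{\theta x}$ with the above factorization, while the lower bound follows from $\Pr_{\bm{i}}(B(t)\le x)=1-\Pr_{\bm{i}}(B(t)>x)$ together with Theorem~\ref{general-quantum-theorem}. The delay bounds are identical in form once I substitute the event $\{A(t-d)-Q^{+}(t)\le 0\}$ from the delay lemma; the sole change is that the arrival factor now accrues over $t-d$ steps, yielding the exponent $(t-d)\kappa^{A}(\pm\theta)$ in place of $t\kappa^{A}(\pm\theta)$, while the capacity factor over $t$ steps is unchanged.

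For the throughput I would expand $A^{\ast}(t)=\bigwedge\{Q^{+}(t),A(t)\}$ through the identity $\Pr(\bigwedge(X,Y)\le z)=\Pr(X\le z)+\Pr(Y\le z)-\Pr(X\le z)\Pr(Y\le z)$, where independence factors the joint probability. To get the upper bound I estimate the two additive marginals from above (Chernoff for $\le$) and the subtracted product from below (via $1-\Pr(\cdot>x)$); for the lower bound I reverse both directions. The main obstacle is not any single estimate---each is routine---but the bookkeeping that keeps the sign of $\theta$ and the choice among $\kappa^{A}$, $\kappa^{Q}$, $\kappa^{-Q}$ consistent across this mixed-direction throughput bound, where the marginals and the product term must be bounded in opposite senses; a sign or parameter mismatch there is the likeliest source of error.
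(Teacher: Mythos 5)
Your proposal is correct and, in substance, follows the same route as the paper: the same decomposition into per-process factors via independence, the same identity $\Pr(\bigwedge(X,Y)\le z)=\Pr(X\le z)+\Pr(Y\le z)-\Pr(X\le z)\Pr(Y\le z)$ for the throughput, and the same conversion of an upper-tail Chernoff estimate into a distributional lower bound through $1-\Pr(\cdot>x)$. The one genuine difference is how the central per-process estimate is obtained. The paper performs an exponential change of measure: it forms the likelihood-ratio martingales $L^{A}_{t}$ and $L^{-Q}_{t}$, multiplies them (valid by independence), writes $\Pr_{\bm i}(E)=\mathbb{E}_{\theta,\bm i}[H(\theta)e^{-\theta(\cdot)+t\kappa(\cdot)};E]$, and then bounds the integrand on the event $E$. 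You instead bound the unconditional moment generating function directly, $\mathbb{E}_{i}[e^{\theta Y(t)}]=(\widehat{\textbf{F}}[\theta]^{t}\bm{1})_{i}\le \frac{h_{i}(\theta)}{\min_{j\in E}h_{j}(\theta)}e^{t\kappa(\theta)}$, using nonnegativity of the kernel and the eigen-relation, and then apply the plain Chernoff bound. The two are equivalent here --- your spectral inequality is exactly the martingale identity $\mathbb{E}_{i}[e^{\theta Y(t)}h_{J_t}(\theta)]=h_{J_0}(\theta)e^{t\kappa(\theta)}$ combined with $h_{J_t}(\theta)\ge\min_{j\in E}h_{j}(\theta)$, so the constants $H_{\pm}$ and all exponents come out identical. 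What your version buys is a slightly more elementary, self-contained derivation that never leaves the original measure; what the paper's tilting buys is the restriction of the expectation to the event (which it then discards by bounding the indicator by one, so no sharpness is actually lost in your version). Your closing caveat about sign bookkeeping in the throughput bound is well placed, but note that the mixed-direction estimate there (marginals bounded one way, the product term the other) is handled in the paper exactly as you describe, including the implicit assumption that the Chernoff factors in the product term do not exceed one.
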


Specifically, if the capacity process or the arrival process is a constant process, we obtained the performance results as follows.
We present the proof in Appendix \ref{map-single-random-result-proof}.

\begin{corollary}[Markov Additive Process]\label{map-single-random-result}
Consider a quantum channel with constant quantum capacity $Q^{+}(t)=Q\cdot{t}$ and Markov additive arrival process $A(t)$. 
Conditional on the initial state $i=J_0 \in E$ of the arrival process.
For some $\theta>0$, the distribution of backlog is bounded by 
\begin{equation}
1 - \frac{h_{J_0}{(\theta)} e^{t\kappa(\theta)-\theta{x}} }{\min\limits_{j\in{E}}(h_{j}{(\theta)})} \le \Pr_{i}({B(t)}\le x) \le \frac{h_{J_0}{(-\theta)} e^{t\kappa(-\theta)+\theta{x}} }{\min\limits_{j\in{E}}(h_{j}{(-\theta)})},
\end{equation}
and the distribution of delay is bounded by
\begin{multline}
1- \frac{h_{J_0}(\theta)}{\min_{j\in E} h_{j}(\theta)} e^{-\theta Qt + (t-d)\kappa(\theta)} \le \\
\Pr_{i}(D(t)\le d) \le \frac{h_{J_0}(-\theta)}{\min_{j\in E} h_{j}(-\theta)} e^{\theta Qt + (t-d)\kappa(-\theta)},
\end{multline}
where $\kappa(\pm\theta)$ and $\bf{h}{(\pm\theta)}$ are respectively the logarithm of the Perron-Frobenius eigenvalue and the corresponding right eigenvector of the kernel for the Markov additive process $A(t)- Q\cdot t$ for backlog, and for $A(t)$ for delay.

Consider a quantum channel with Markov additive quantum capacity $Q^{+}(t)=\sum_{i=0}^{t}[Q(t)]^{+}$ and constant arrival process $A(t)=\lambda\cdot{t}$. 
Conditional on the initial state $i=J_0 \in E$ of the capacity process.
For some $\theta>0$, the distribution of backlog is bounded by 
\begin{equation}
1 - \frac{h_{J_0}{(\theta)} e^{t\kappa(\theta)-\theta{x}} }{\min\limits_{j\in{E}}(h_{j}{(\theta)})} \le \Pr_{i}({B(t)}\le x) \le \frac{h_{J_0}{(-\theta)} e^{t\kappa(-\theta)+\theta{x}} }{\min\limits_{j\in{E}}(h_{j}{(-\theta)})},
\end{equation}
and the distribution of delay is bounded by
\begin{multline}
1 - \frac{h_{J_0}(-\theta)}{\min_{j\in E} h_{j}(-\theta)} e^{\theta (t-d)\lambda + t\kappa(-\theta)} \le\\
\Pr_{i}(D(t) \le d) \le
\frac{h_{J_0}(\theta)}{\min_{j\in E} h_{j}(\theta)} e^{-\theta (t-d)\lambda + t\kappa(\theta)},
\end{multline}
where $\kappa(\pm\theta)$ and $\bf{h}{(\pm\theta)}$ are respectively the logarithm of the Perron-Frobenius eigenvalue and the corresponding right eigenvector of the kernel for the Markov additive process $Q^{+}(t)-\lambda\cdot t$ for backlog, and for $Q^{+}(t)$ for delay.
\end{corollary}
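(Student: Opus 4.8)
The plan is to reduce every bound to a one-sided tail estimate for a single Markov additive process and then combine two ingredients: the Chernoff bound and the Perron--Frobenius moment estimate for a MAP. First I would invoke the queueing identities already established. With $B(0)=0$ the backlog is $B(t)=A(t)-Q^{+}(t)$, and the preceding delay lemma gives $\Pr_{i}(D(t)\le d)=\Pr_{i}\qty{A(t-d)-Q^{+}(t)\le 0}$. In each of the two scenarios exactly one of $A$ and $Q^{+}$ is a genuine MAP while the other is a deterministic ramp ($Q\cdot t$ or $\lambda\cdot t$), so in every case the event of interest is a tail event for a single, possibly shifted, MAP. For the backlog in scenario one I set $Y(t)=A(t)-Q\cdot t$, itself a MAP; for the delay I keep $A(t)$ and move the deterministic part into the threshold, $\qty{A(t-d)\le Q t}$. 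Scenario two is symmetric with the roles of $A$ and $Q^{+}$ interchanged, the relevant MAP being $Q^{+}(t)-\lambda\cdot t$ for the backlog and $Q^{+}(t)$ for the delay.

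The workhorse I would isolate is the following moment bound for a MAP $Y(t)$ with underlying chain started at $J_0=i$: writing $\kappa(\pm\theta)$ and the positive right eigenvector $\bm{h}(\pm\theta)$ for the logarithm of the Perron--Frobenius eigenvalue and the eigenvector of the kernel transform, one has
\begin{equation}
\mathbb{E}_{i}\qty[ e^{\pm\theta Y(t)} ] \le \frac{h_{i}(\pm\theta)}{\min_{j\in E} h_{j}(\pm\theta)}\, e^{t\kappa(\pm\theta)}.
\end{equation}
This follows in one line from the eigen-relation $\hat{F}[\theta]^{t}\bm{h}(\theta)=e^{t\kappa(\theta)}\bm{h}(\theta)$: the entries of $\hat{F}[\theta]^{t}$ are nonnegative, so bounding each $h_{j}$ below by $\min_{j}h_{j}$ and reading off the $i$-th row gives the claim, and likewise at $-\theta$.

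With these tools the bounds fall out by matching the sign of $\theta$ to the tail in question. For the upper bound on $\Pr_{i}(B(t)\le x)$ I would write the event as a lower tail, apply the Chernoff bound at $-\theta$, and insert the moment estimate, producing the factor $\tfrac{h_{J_0}(-\theta)}{\min_{j}h_{j}(-\theta)}\,e^{t\kappa(-\theta)+\theta x}$; for the complementary lower bound I use $\Pr_{i}(B(t)\le x)=1-\Pr_{i}(B(t)>x)$ and bound the upper tail at $+\theta$. The delay bounds are identical except that the MAP is evaluated at time $t-d$ and the deterministic ramp appears as an explicit linear offset in the exponent---$\theta Q t$ in scenario one and $\theta(t-d)\lambda$ in scenario two, with its sign tied to the tail being estimated---since the deterministic process contributes its exact moment generating function and a trivial eigenvector ratio equal to one. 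Equivalently, the entire corollary is the degenerate specialization of Theorem \ref{map-theorem} obtained by letting one process be deterministic, which collapses the product factor $H_{\pm}$ to the single surviving eigenvector ratio and linearizes the corresponding $\kappa$.

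The main obstacle is purely the bookkeeping of signs: correctly identifying, for each scenario and for each of backlog and delay, which tail of which MAP is being estimated and hence whether $+\theta$ or $-\theta$ enters $\kappa$ and $\bm{h}$. The delicate point is the backlog in scenario two, where $B(t)=-\qty(Q^{+}(t)-\lambda t)$ is the negative of the quoted MAP, so $\qty{B(t)\le x}=\qty{Q^{+}(t)-\lambda t\ge -x}$ must be treated as an upper-tail event; the sign convention for $\kappa(\pm\theta)$ and $\bm{h}(\pm\theta)$ has to be kept consistent with scenario one so that the two displayed backlog bounds genuinely coincide. The same care is needed for the delay, where the threshold direction flips between the scenarios (lower tail of $A$ against the threshold $Qt$ versus upper tail of $Q^{+}$ against $\lambda(t-d)$). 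Once this convention is fixed, the independence hypothesis needed in the parent theorem is not even required here, since in each scenario only one of the two processes is random.
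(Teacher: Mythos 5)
Your proposal is correct and takes essentially the same route as the paper's proof: the paper organizes the estimate as an exponential change of measure via the likelihood-ratio martingale $L(t)=\frac{h_{J_t}(\theta)}{h_{J_0}(\theta)}e^{\theta B(t)-t\kappa(\theta)}$ and then bounds the tilted expectation on the event, which is computation-for-computation identical to your Chernoff bound combined with the Perron--Frobenius moment estimate $\mathbb{E}_i\bigl[e^{\theta Y(t)}\bigr]\le \frac{h_i(\theta)}{\min_{j\in E}h_j(\theta)}e^{t\kappa(\theta)}$, and both yield the lower bounds by complementation. Your explicit flagging of the sign convention for the scenario-two backlog, where $B(t)$ is the negative of the quoted MAP $Q^{+}(t)-\lambda t$, is apt, since the paper proves only scenario one in detail and asserts that the random-capacity case ``follows analogically.''
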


\begin{remark}
The theorems and corollary in this section hold in general, regardless the channel specification.
\end{remark}

\begin{remark}
The distribution upper bound and lower bound hold respectively from above the mean value and from below the mean value of the considered process, and they do not hold synchronously.
\end{remark}

\begin{remark}
For both i.i.d. case and Markov additive case, the proof indicates a few constraints on the free parameter optimization.
(1) The free parameters in the distribution upper bound and lower bound of the same process should be optimized separately. (2) For the backlog results and delay results, the free parameters in the arrival process and in the capacity process should be optimized together, because they share a common change of measure. (3) For the throughput results, the free parameters in the arrival process and the capacity process can be optimized separately, because they share different change of measure, in addition, the constraint in (1) should be taken into account when the distribution upper bound and lower bound of the same process is concerned.
As an example, we demonstrate the i.i.d. case for bosonic quantum channel in Fig. \ref{fig_quantum_independent}.
\end{remark}

\section{Conclusion}
\label{conclusion}

We develop a framework for queueing analysis of classical information and quantum information transmission in quantum channels, and study the tail distribution of the performance measures.
Particularly, we propose a new queueing model for quantum information transmission, which captures the quantum channel character that the quantum capacity is preservable as entanglement pairs for future communication, and complements the classical queueing model for classical information.
We provide both generic and specific results of performance analysis.
For the generic results, we apply the union bound, Chernoff bound, and H\"{o}lder's inequality, which are general results without constraint on the underlying stochastic process, the obtained results hold in general and apply to the complex scenarios where detailed knowledge is lacking or specific analysis is difficult to track. 
As a refinement, we take advantage of the statistical properties of the specific stochastic processes for tighter results, e.g., i.i.d. process and Markov additive process.
These specific results have no constraints on the underlying distributions of the stochastic processes and hold for a class of arrival processes and quantum channel scenarios.

We highlight that the performance analysis of quantum channels provides huge research opportunities for research.
As an outlook, we list some potential research topics as follows.
\begin{enumerate}
\item
Analysis of diverse quantum channel models, e.g., multiple access channel \cite{horodecki2007quantum}, broadcast channel \cite{yard2011quantum}, and multi-hop channel \cite{sun2017performance}.
In this paper, we have considered performance analysis in a single channel scenario. It is interesting to consider the performance analysis at network scale.

\item
Analysis of quantum channel resource trade-off.
In this paper, we have considered performance analysis based on the elemental capacity concepts. It is interesting to consider trade-off capacities \cite{bradler2010trade}\cite{bennett2014quantum} and the impact of multiplexing of classical and quantum information \cite{devetak2005capacity}. 

\item
Application to one-shot capacity \cite{holevo2001evaluating}. 
In this paper, we have considered applying the framework to the asymptotic capacity as demonstration. It is interesting to use this framework to investigate the one-shot scenario to demonstrate the impact of the constraints in reality.

\item
Application to more quantum channels in practice and investigation of the impact of different system parameters.
In this paper, we have considered the bosonic channel and the qubit  channel \cite{caruso2007qubit}\cite{wolf2007quantum-qubit}\cite{giovannetti2005information}, with a default application to quantum communication. It is interesting to apply this analysis framework to other quantum systems, e.g., quantum memory and storage. 

\item
Study on the statistical property of the quantum channel capacity and its impact on the quantum channel performance. 
The framework in this paper applies to both constant and random capacity processes. It is interesting to investigate how the statistical effects in the environment influence the capacity randomness and the channel performance, e.g., the transmissivity is a random variable due to atmospheric turbulence in the free space \cite{blake2011capacity}.
\end{enumerate}

\bibliographystyle{IEEEtran}
\bibliography{main}

\appendices

\section{Markov Additive Process}\label{markov-additive-process-description}

A Markov additive process is defined as a bivariate Markov process $\{X_t\}=\{(J_t,S(t))\}$ where $\{J_t\}$ is a Markov process with state space $E$ and the increments of $\{S(t)\}$ are governed by $\{J_t\}$ in the sense that \cite{asmussen2003applied}
\begin{equation}
\mathbb{E}[f(S({t+s})-S(t))g(J_{t+s})|\mathscr{F}_t] = \mathbb{E}_{J_t,0}[f(S(s))g(J_s)].
\end{equation}
For finite state space and discrete time, a Markov additive process is specified by the measure-valued matrix (kernel) $\mathbf{F}(dx)$ whose $ij$th element is the defective probability distribution 
\begin{equation}
F_{ij}(dx)=\mathbb{P}_{i,0}(J_1=j,Y_1\in{dx}),
\end{equation}
where $Y_t=S(t)-S(t-1)$. An alternative description is in terms of the transition matrix $\mathbf{P}=(p_{ij})_{i,j\in{E}}$ (here $p_{ij}= \mathbb{P}_i(J_1=j)$) and the probability measures
\begin{equation}
H_{ij}(dx) = \mathbb{P}(Y_1\in{dx}|J_0=i, J_1=j) = \frac{F_{ij}(dx)}{p_{ij}}.
\end{equation} 

Consider the matrix $\widehat{\textbf{F}}_t[\theta]=(\mathbb{E}_i[e^{\theta{S(t)}};J_t=j])_{i,j\in{E}}$, it is proved that \cite{asmussen2003applied}
$
\widehat{\textbf{F}}_t[\theta]=\widehat{\textbf{F}}[\theta]^t,
$
where $\widehat{\textbf{F}}[\theta]=\widehat{\textbf{F}}_1[\theta]$ is a $E\times{E}$ matrix with $ij$th element $\widehat{F}^{(ij)}[\theta]=p_{ij}\int{e^{\theta{x}}H^{(ij)}}(dx)$, and $\theta\in\Theta=\{ \theta\in\mathbb{R}:\int e^{\theta{x}}H^{(ij)} (dx) < \infty \}$. By Perron-Frobenius theory, $e^{\kappa(\theta)}$ and $\textbf{h}^{(\theta)}=(h_{i}^{(\theta)})_{i\in{E}}$ are respectively the positive real eigenvalue with maximal absolute value and the corresponding right eigenvector of $\widehat{\textbf{F}}[\theta]$, i.e., $\widehat{\textbf{F}}[\theta]\textbf{h}^{(\theta)}=e^{\kappa(\theta)}\textbf{h}^{(\theta)}$.
In addition, for the left eigenvector $\textbf{v}^{(\theta)}$, $\textbf{v}^{(\theta)}\textbf{h}^{(\theta)}=1$ and $\bm{\varpi}\textbf{h}^{(\theta)}=1$, where $\bm{\varpi}=\textbf{v}^{(0)}$ is the stationary distribution and $\textbf{h}^{(0)}=\textbf{e}$.

\section{Proof of Theorem \ref{general-classical-theorem}}\label{classical-queue-theorem-general}

The tail of backlog is bounded by
\begin{IEEEeqnarray}{rCl}
\IEEEeqnarraymulticol{3}{l}{
\Pr( B(t)>x) = \Pr \qty{ \sup_{0\le{s}\le{t}}({A}(s,t)-{S}(s,t)) >x} 
} \IEEEeqnarraynumspace\\
&\le& \sum_{s=0}^{t} \Pr \qty{ {A}(s,t)-{S}(s,t) >x} \\
&\le& \sum_{s=0}^{t} \mathbb{E}\qty[ e^{\theta(A(s,t)-S(s,t))} ] \cdot e^{-\theta x},
\end{IEEEeqnarray}
where the first inequality follows the union bound and the second inequality follows the Chernoff bound.

The tail of throughput is bounded by
\begin{IEEEeqnarray}{rCl}
\IEEEeqnarraymulticol{3}{l}{
\Pr \qty( A^{\ast}(t) >x ) = \Pr \qty{ \inf_{0\le s\le t} (A(0,s)+S(s,t)) >x } 
} \IEEEeqnarraynumspace\\
&\le& \bigwedge_{0\le s\le t} \Pr \qty{ A(0,s)+S(s,t) >x } \\
&\le& \bigwedge_{0\le s\le t} \mathbb{E}\qty[ e^{\theta(A(0,s) + S(s,t))} ] \cdot e^{-\theta x},
\end{IEEEeqnarray}
where  the second inequality follows the Chernoff bound.

The tail of delay is bounded by
\begin{IEEEeqnarray}{rCl}
\IEEEeqnarraymulticol{3}{l}{
\Pr ( D(t)>d) = \Pr \qty{ A(t-d) > A^\ast(t) } }\\
&\le& \mathbb{E} \qty[ e^{\theta \sup\limits_{0\le s\le t} \qty{ A(0,t-d)- A(0,s) - S(s,t)  } } ] \IEEEeqnarraynumspace\\
&=& \mathbb{E} \qty[ e^{\theta \sup\limits_{0\le s\le t} \qty{ A(s,t) - S(s,t) } - A(t-d,t) } ] \\
&=& \mathbb{E} \qty[ e^{\theta \sup\limits_{0\le s\le t} \qty{ A(s,t) - S(s,t) } } e^{-\theta A(t-d,t) } ] \\
&\le& \mathbb{E} \qty[ e^{\theta \sup\limits_{0\le s\le t} \qty{ A(s,t) - S(s,t) } p} ]^{1/p} \mathbb{E} \qty[ e^{-\theta A(t-d,t) q} ]^{1/q} \\
&=& \mathbb{E} \qty[\sup\limits_{0\le s\le t}  e^{\theta \qty{ A(s,t) - S(s,t) } p} ]^{1/p} \mathbb{E}\qty[ e^{-\theta A(t-d,t) q} ]^{1/q} \\
&\le& \sum\limits_{0\le s\le t} \mathbb{E} \qty[  e^{\theta \qty{ A(s,t) - S(s,t) } p} ]^{1/p} \mathbb{E}\qty[ e^{-\theta A(t-d,t) q} ]^{1/q}, \IEEEeqnarraynumspace
\end{IEEEeqnarray}
where the first inequality follows the Chernoff bound, and the second inequality follows the H\"{o}lder's inequality for positive $p$ and $q$ with $1/p+1/q=1$.

\section{Proof of Theorem \ref{iid-classical-theorem}}\label{iid-classical-theorem-proof}

For a constant service process $S(t)=C\cdot {t}$, the tail of delay is bounded by
\begin{IEEEeqnarray}{rCl}
\Pr(D > d) &=& \Pr\left\{ \sup_{t\ge{0}}({A}(t)-C\cdot t) > {C{d}} \right\} \\
&\le& e^{-\theta{C{d}}},
\end{IEEEeqnarray}
where the first equality follows the time reversibility assumption, 
the last inequality follows the Lundberg's inequality \cite{rolski1998stochastic,asmussen2010ruin}, if $\theta(>0)$ satisfies the Lundberg equation $\kappa(\theta)=0$, where
$\kappa(\theta)= \log\int e^{\theta(a(t)-C)} F(dx)$.

The tail of throughput is bounded by
\begin{IEEEeqnarray}{rCl}
\IEEEeqnarraymulticol{3}{l}{
\Pr \qty( A^{\ast}(t) >x ) 
\le \bigwedge_{0\le s\le t} \mathbb{E}\qty[ e^{\theta(A(0,s) + S(s,t))} ] \cdot e^{-\theta x}
} \IEEEeqnarraynumspace\\
&=& \bigwedge_{0\le s\le t} \mathbb{E}\qty[ e^{\theta A(0,s) } ] \mathbb{E} \qty[ e^{\theta S(s,t) } ] \cdot e^{-\theta x} \\
&=& \bigwedge_{0\le s\le t} \mathbb{E}\qty[ e^{\theta a } ]^{s} \cdot e^{\theta (t-s)C } \cdot e^{-\theta x}.
\end{IEEEeqnarray}

The proofs of other results follow analogically.

\section{Proof of Theorem \ref{map-classical-theorem}}\label{map-classical-theorem-proof}

The tail of backlog is expressed as
\begin{IEEEeqnarray}{rCl}
\Pr_{i}(B>x) &=& \Pr_{i}\qty{ \sup_{t\ge0} \qty( A(t) - C\cdot t ) > x } \\
&\le& \frac{h_{J_0}{(\theta)}}{\min_{j\in{E}}h_{j}{(\theta)}}e^{-\theta{x}}, 
\end{IEEEeqnarray}
where the first equality follows the time reversibility assumption, the last inequality follows the Lundberg's inequality, if $\theta>0$ satisfies the Lundberg equation $\kappa(\theta)=0$. $\kappa(\theta)$ and $\bm{h}{(\theta)}$ are respectively the logarithm of the Perron-Frobenius eigenvalue and the corresponding right eigenvector of the kernel for the Markov additive process $A(t)-C\cdot{t}$. 

The tail of throughput is bounded by
\begin{IEEEeqnarray}{rCl}
\IEEEeqnarraymulticol{3}{l}{
\Pr_{i} \qty( A^{\ast}(t) >x ) 
\le \bigwedge_{0\le s\le t} \mathbb{E}\qty[ e^{\theta(A(0,s) + S(s,t))} ] \cdot e^{-\theta x}
} \IEEEeqnarraynumspace\\
&=& \bigwedge_{0\le s\le t} \mathbb{E}\qty[ e^{\theta A(0,s) } ] \mathbb{E} \qty[ e^{\theta S(s,t) } ] \cdot e^{-\theta x} \\
&=& \bigwedge_{0\le s\le t} \mathbb{E}\qty[ e^{\theta A(0,s) } ]  \cdot e^{\theta (t-s)C } \cdot e^{-\theta x} \\
&\le& \bigwedge_{0\le s\le t} \frac{h_{J_0}(\theta)}{\min_{j\in E}h_{j}(\theta)} \cdot e^{s\kappa(\theta)} \cdot e^{\theta (t-s)C } \cdot e^{-\theta x},
\end{IEEEeqnarray}
where the last step follows $\mathbb{E}_i\qty[e^{\theta A(t)} h_{J_t}(\theta)]=h_{J_0}(\theta) e^{t\kappa(\theta)}$.

The proofs of other results follow analogically.

\section{Proof of Theorem \ref{quantum-limit-behavior}}
\label{quantum-limit-behavior-proof}

Note $\Pr ( D(t) \le d) = \Pr\{ {  A(t-d) - \sum_{i=0}^{t}[Q(i)]^{+} } \le 0 \}$ and $\Pr ( B(t) \le x ) = \Pr\{ {  A(t) - \sum_{i=0}^{t}[Q(i)]^{+} } \le 0 \}$, and $D(t)\ge 0$. Letting $d=x=0$ yields $\Pr(D(t) = 0) = \Pr(B(t)\le 0)$.

For $\lambda=Q$,
\begin{IEEEeqnarray}{rCl}
\IEEEeqnarraymulticol{3}{l}{
\lim_{t\rightarrow\infty} \Pr \qty{ D(t)= 0 | \lambda = Q }
= \lim_{t\rightarrow\infty} \Pr \qty{ B(t) \le 0 | \lambda = Q }. 
} \IEEEeqnarraynumspace 
\end{IEEEeqnarray}
Since $\lim_{t\rightarrow\infty} B(t)$ oscillates around zero, the probability locates in $(0,1)$.

For $\lambda>Q$,
\begin{IEEEeqnarray}{rCl}
\IEEEeqnarraymulticol{3}{l}{
\lim_{t\rightarrow\infty} \Pr \qty{ D(t)= 0 | \lambda > Q }
= \lim_{t\rightarrow\infty} \Pr \qty{ B(t) \le 0 | \lambda > Q } 
} \IEEEeqnarraynumspace\\
&\le& \Pr \qty{ \sup_{t\ge 0} B(t) \le 0 | \lambda > Q } = 0.
\end{IEEEeqnarray}
For $\lambda<Q$,
\begin{IEEEeqnarray}{rCl}
\IEEEeqnarraymulticol{3}{l}{
\lim_{t\rightarrow\infty} \Pr \qty{ D(t)= 0 | \lambda < Q }
= \lim_{t\rightarrow\infty} \Pr \qty{ B(t) \le 0 | \lambda < Q } 
} \IEEEeqnarraynumspace\\
&\ge& \Pr \qty{ \inf_{t\ge 0} B(t) \le 0 | \lambda < Q } = 1.
\end{IEEEeqnarray}

\section{Proof of Theorem \ref{limit-behavior-stationary}}
\label{limit-behavior-stationary-proof}

For the throughput, 
\begin{IEEEeqnarray}{rCl}
\IEEEeqnarraymulticol{3}{l}{
\lim_{t\rightarrow\infty} \mathbb{E} \qty[ \frac{A^\ast(t)}{t} ] = \lim_{t\rightarrow\infty} \mathbb{E} \qty[ \frac{\bigwedge(Q^{+}(t), A(t))}{t} ]
} \\
&=& \lim_{t\rightarrow\infty} \mathbb{E} \qty[ \bigwedge \qty (\frac{Q^{+}(t)}{t}, \frac{A(t)}{t} ) ]\\
&=&  \mathbb{E} \qty[ \bigwedge \qty( \lim_{t\rightarrow\infty} \qty (\frac{Q^{+}(t)}{t}, \frac{A(t)}{t} ) ) ] \\
&=& \bigwedge( \lambda, {Q} ).
\end{IEEEeqnarray}
For backlog $B(t)=A(t)- Q^{+}(t)$, the proof follows analogically.

For the delay, 
to avoid non-trivial considerations, we consider continuous time setting, where similar queueing principle expressions hold.
Particularly, the continuous time setting is the limit of the discrete time setting.
Denote $f(d,t) := \Pr\qty{ A(t-d) - Q^{+}(t) \le 0}$.
$0 \le f(d',t) \le f(d,t)\le 1$, $\forall 0\le d'\le d\le t$. 
\begin{IEEEeqnarray}{rCl}
\IEEEeqnarraymulticol{3}{l}{
\lim_{t\rightarrow\infty} \mathbb{E}\qty[ D(t) ] = \lim_{t\rightarrow\infty}  \int_{0}^{t} d \dd{ f(d,t) }
} \IEEEeqnarraynumspace\\
&=& \lim_{t\rightarrow\infty} \qty( \qty( d \cdot f(d,t) ) |_{0}^{t} -  \int_{0}^{t} { f(d,t) } \dd{d} )\\
&=& \lim_{t\rightarrow\infty} \qty( t -  \int_{0}^{t} { f(d,t) } \dd{d} ), 
\end{IEEEeqnarray}
where the second equality follows the integration by parts and the third equality follows $f(t,t)=1$.

Denote $f(d,t; \lambda=Q) := \Pr\{ A(t-d) - Q^{+}(t) \le 0 | \lambda =Q \}$. Then
\begin{IEEEeqnarray}{rCl}
\IEEEeqnarraymulticol{3}{l}{
\lim_{t\rightarrow\infty} f(0,t; \lambda=Q) = \lim_{t\rightarrow\infty} \Pr\qty{ A(t) - Q^{+}(t) \le 0 | \lambda =Q}
} \IEEEeqnarraynumspace\\
&=& \Pr\qty{ \lim_{t\rightarrow\infty} \frac{ A(t) - Q^{+}(t) }{t} \le 0 | \lambda =Q} =1.
\end{IEEEeqnarray}
Thus,
\begin{IEEEeqnarray}{rCl}
\IEEEeqnarraymulticol{3}{l}{
\lim_{t\rightarrow\infty} \mathbb{E}\qty[ D(t) | \lambda = Q ] = \lim_{t\rightarrow\infty} \qty( t -  \int_{0}^{t} { f(d,t; \lambda = Q) } \dd{d} )
} \IEEEeqnarraynumspace\\
&=& 0,
\end{IEEEeqnarray}
where the second equality follows $f(0,t; \lambda = Q)\le f(d,t; \lambda = Q) \le 1$, $\forall 0\le d\le t$.

Since $\lim_{t\rightarrow\infty} f(0,t; \lambda<Q) = 1$, we obtain $\lim_{t\rightarrow\infty} \mathbb{E}[ D(t) | \lambda < Q ] = 0$.

Denote $f(d,t; \lambda>Q) := \Pr\{ A(t-d) - Q^{+}(t) \le 0 | \lambda >Q \}$. Then
\begin{IEEEeqnarray}{rCl}
\IEEEeqnarraymulticol{3}{l}{
\lim_{t\rightarrow\infty} f(0,t; \lambda>Q) = \lim_{t\rightarrow\infty} \Pr\qty{ A(t) - Q^{+}(t) \le 0 | \lambda > Q}
} \IEEEeqnarraynumspace\\
&=& \Pr\qty{ \lim_{t\rightarrow\infty} \frac{ A(t) - Q^{+}(t) }{t} \le 0 | \lambda > Q} = 0.
\end{IEEEeqnarray}
Thus,
\begin{equation}
\lim_{t\rightarrow\infty} \mathbb{E}\qty[ D(t) | \lambda > Q ] = \lim_{t\rightarrow\infty} \qty( t -  \int_{0}^{t} { f(d,t; \lambda>Q) } \dd{d} )  \ge 0, 
\end{equation}
where the second equality follows $f(0,t; \lambda > Q)\le f(d,t; \lambda > Q) \le 1$, $\forall 0\le d\le t$.

\section{Proof of Theorem \ref{general-quantum-theorem}}\label{quantum-queue-theorem-general}

The tail of the queue backlog is expressed as
\begin{IEEEeqnarray}{rCl}
\IEEEeqnarraymulticol{3}{l}{
\Pr (B(t)>x) 
= \Pr\qty{ \sum_{i=0}^{t}a(i) - \sum_{i=0}^{t}[Q(i)]^{+} >x } 
}\IEEEeqnarraynumspace\\
&\le& \mathbb{E}\qty[ e^{ \theta \qty( \sum_{i=0}^{t}a(i) - \sum_{i=0}^{t}[Q(i)]^{+} ) } ] \cdot e^{-\theta x},
\end{IEEEeqnarray}
where the inequality follows the Chernoff bound.

The tail of throughput is expressed as
\begin{IEEEeqnarray}{rCl}
\IEEEeqnarraymulticol{3}{l}{
\Pr \qty( A^{\ast}(t) >x ) 
= \Pr \qty{ \bigwedge \left\{ \sum_{i=0}^{t}[Q(i)]^{+}, A(t) \right\} >x } 
}\IEEEeqnarraynumspace\\
&\le& \Pr \qty{ \frac{ \sum_{i=0}^{t}[Q(i)]^{+} + A(t) }{2} > x } \\
&\le& \mathbb{E}\qty[ e^{ \theta \qty( \sum_{i=0}^{t}[Q(i)]^{+} + A(t) ) } ] \cdot e^{-\theta 2x},
\end{IEEEeqnarray}
where the first inequality follows $\bigwedge(X,Y)\le(X+Y)/2$, and the second inequality follows the Chernoff bound.

The tail of delay is expressed as 
\begin{IEEEeqnarray}{rCl}
\IEEEeqnarraymulticol{3}{l}{
\Pr ( D(t)>d) = \Pr\qty{ A(t-d) > A^\ast(t) } 
} \\
&=& \Pr\qty{ {  A(t-d) - Q^{+}(t) } > 0 } \\
&\le& \mathbb{E}\qty[ e^{ \theta \qty( A(t-d) - \sum_{i=0}^{t}[Q(i)]^{+} ) } ] \\
&=& \mathbb{E}\qty[ e^{ \theta \qty(A(t) - \sum_{i=0}^{t}[Q(i)]^{+} ) } e^{ -\theta A(t-d,t) } ] \\
&\le& \mathbb{E}\qty[ e^{ \theta \qty( A(t) - \sum_{i=0}^{t}[Q(i)]^{+} ) p } ]^{1/p} \mathbb{E}\qty[e^{ -\theta A(t-d,t) q}]^{1/q}, \IEEEeqnarraynumspace
\end{IEEEeqnarray}
where $Q^{+}(t) = \sum_{i=0}^{t}[Q(i)]^{+}$, 
the second inequality follows the Chernoff bound, and the third inequality follows the H\"{o}lder's inequality for positive $p$ and $q$ with $1/p+1/q=1$.

\section{Proof of Theorem \ref{iid-theorem}}\label{iid-theorem-proof}

Consider the quantum capacity $Q(t) \overset{\mathrm{d}}{=}Q$ and instantaneous arrival $a(t)\overset{\mathrm{d}}{=}a$.
A likelihood ratio process of the backlog is formulated and expressed as \cite{asmussen2003applied}
\begin{equation}
L(t) = e^{\theta{B(t)}-t\kappa(\theta)},
\end{equation}
where $L(t)$ is a mean-one martingale and $\kappa(\theta)$ is the cumulant generating function, i.e., 
$
\kappa(\theta) = \log\mathbb{E}\left[e^{\theta({a-Q})}\right] = \log\int e^{\theta{x}} F(dx),
$
where $\theta\in\Theta=\{ \theta\in\mathbb{R}:\kappa(\theta)<\infty \}$.

According to Markov inequality, for any $\mu>0$,
\begin{equation}
\Pr\{ L(t)\ge{\mu} \} \le\frac{1}{\mu}\mathbb{E}[L(t)]=\frac{1}{\mu}.
\end{equation}
Letting $\mu = e^{-t\kappa(\theta)+\theta{x}}$,
for $\theta\le{0}$, the cumulative distribution function is bounded by
\begin{equation}
\Pr\{ B(t)\le x \} \le e^{t\kappa(\theta)-\theta{x}},
\end{equation}
while for $\theta>0$, the complementary cumulative distribution function is expressed as
\begin{equation}
\Pr\{ B(t)\ge x \} \le e^{t\kappa(\theta)-\theta{x}},
\end{equation}
which shows that the distribution has a light tail.

The tail of delay is expressed as
\begin{IEEEeqnarray}{rCl}
\IEEEeqnarraymulticol{3}{l}{
\Pr ( D(t)>d) = \Pr\qty{ A(t-d) > A^\ast(t) } 
} \\
&=& \Pr\qty{ {A(t-d) - \sum_{i=0}^{t}[Q(i)]^{+} } > 0 } \\
&\le& \mathbb{E}_{\theta} \qty[ e^{ -\theta \hat{D}(t,d) + t\kappa^{Q}(-\theta) + (t-d) \kappa^{A}(\theta) }; \hat{D}(t,d) >0 ] \IEEEeqnarraynumspace\\
&\le& e^{t\kappa^{Q}(-\theta) + (t-d)\kappa^{A}(\theta) },
\end{IEEEeqnarray}
where $\hat{D}(t,d) := A(t-d)-Q^{+}(t)$.
Similarly, the tail of delay is lower bounded by
\begin{IEEEeqnarray}{rCl}
\IEEEeqnarraymulticol{3}{l}{
\Pr ( D(t)\le d) = \Pr\qty{ {A(t-d) - \sum_{i=0}^{t}[Q(i)]^{+} } \le 0 }
} \\
&\le& \mathbb{E}_{-\theta} \qty[ e^{ \theta \hat{D}(t,d) + t\kappa^{Q}(\theta) + (t-d) \kappa^{A}(-\theta) }; \hat{D}(t,d) \le 0 ] \IEEEeqnarraynumspace\\
&\le& e^{t\kappa^{Q}(\theta) + (t-d)\kappa^{A}(-\theta) },
\end{IEEEeqnarray}
where $\hat{D}(t,d) := A(t-d)-Q^{+}(t)$.

The distribution of throughput is expressed as, for $\theta>0$,
\begin{IEEEeqnarray}{rCl}
\IEEEeqnarraymulticol{3}{l}{
\Pr \qty( A^{\ast}(t) \le x ) 
= \Pr \qty{ \bigwedge \left\{ Q^{+}(t), A(t) \right\} \le x } 
}\IEEEeqnarraynumspace\\
&=& \Pr( Q^{+}(t) \le x ) + \Pr(A(t) \le x) \nonumber\\
&& - \Pr( Q^{+}(t) \le x ) \Pr(A(t) \le x) \\
&\le& e^{t\kappa^{Q}(-\theta) + \theta{x}} + e^{t\kappa^{A}(-\theta) + \theta{x}} \\
&& - \qty( 1 - e^{t\kappa^{Q}(\theta)-\theta{x}} ) \times \qty( 1 - e^{t\kappa^{A}(\theta)-\theta{x}} ),
\end{IEEEeqnarray}
similarly,
the distribution is lower bounded by
\begin{IEEEeqnarray}{rCl}
\Pr \qty( A^{\ast}(t) \le x ) 
&\ge& 2- e^{t\kappa^{Q}(\theta) - \theta{x}} - e^{t\kappa^{A}(\theta) - \theta{x}} \\
&& - e^{t\kappa^{Q}(-\theta)+\theta{x}} \times e^{t\kappa^{A}(-\theta)+\theta{x}}, \IEEEeqnarraynumspace
\end{IEEEeqnarray}
where $\kappa^{Q}(\pm\theta)$ and $\kappa^{A}(\pm\theta)$ correspond respectively to the cumulative generating function of $Q$ and $a$.

\section{Proof of Theorem \ref{map-theorem}}\label{map-theorem-proof}

We prove the results based on the change of measure approach \cite{asmussen2003applied}, represent the distribution in the changed measure, and find upper or lower bounds of the distribution.

The likelihood ratio martingale of the arrival process $A(t)$ is expressed as
\begin{equation}
L^{A}_{t} = \frac{h_{J_t}^{A}{(\theta)}}{h_{J_0}^{A}{(\theta)}} e^{\theta A(t) - t\kappa^{A}(\theta)},
\end{equation}
which is a mean-one martingale,
and the likelihood ratio mean-one martingale of the service process $-Q^{+}(t)$ is 
\begin{equation}
L^{-Q}_{t} = \frac{h_{J_t}^{-Q}{(\theta)}}{h_{J_0}^{-Q}{(\theta)}} e^{-\theta Q(0,t) - t\kappa^{-Q}(\theta)}.
\end{equation}
Assume the arrival process and the service process are independent, then the product of the martingales 
\begin{equation}
L^{A-Q}_{0,t} =  L^{A}_{t}  \cdot L^{-Q}_{t}
\end{equation}
is also a martingale \cite{cherny2006some}, and
\begin{equation}
\mathbb{E} \left[ L^{A-Q}_{0,t} \right] = \mathbb{E}\left[ L^{A}_{t}  \right] \cdot \mathbb{E}\left[ L^{-Q}_{t} \right] = 1.
\end{equation}

The tail of backlog is expressed as
\begin{IEEEeqnarray}{rCl}
\IEEEeqnarraymulticol{3}{l}{
\Pr_{\bm i} (B(t)>x) 
= \Pr_{\bm i}\qty{ A(t) - Q^{+}(t) >x } 
}\IEEEeqnarraynumspace\\
&\le& \mathbb{E}_{\theta, \bm i} \qty[ H(\theta)  e^{- \theta (A(t) - Q(t)) + t\kappa^{A}(\theta) + t\kappa^{-Q}(\theta) }; B(t)>x ] \IEEEeqnarraynumspace\\
&\le& H_{-} \cdot e^{- \theta x + t\kappa^{A}(\theta) + t\kappa^{-Q}(\theta) },
\end{IEEEeqnarray}
and 
\begin{IEEEeqnarray}{rCl}
\IEEEeqnarraymulticol{3}{l}{
\Pr_{\bm i} (B(t) \le x) 
= \Pr_{\bm i}\qty{ A(t) - Q^{+}(t) \le x } 
}\IEEEeqnarraynumspace\\
&=& \mathbb{E}_{-\theta, \bm i} \qty[ H(-\theta)  e^{ \theta B(t) + t\kappa^{A}(-\theta) + t\kappa^{-Q}(-\theta) }; B(t) \le x ] \IEEEeqnarraynumspace\\
&\le& H^{+} e^{ \theta x + t\kappa^{A}(-\theta) + t\kappa^{-Q}(-\theta) },
\end{IEEEeqnarray}
where $H_{-} = \frac{h_{J_0}^{A}{(\theta)}}{\min_{j\in E} h_{j}^{A}{(\theta)}} \frac{h_{J_0}^{-Q}{(\theta)}}{\min_{j\in E}h_{j}^{-Q}{(\theta)}}$, $H_{+} = \frac{h_{J_0}^{A}{(-\theta)}}{\min_{j\in E} h_{j}^{A}{(-\theta)}} \frac{h_{J_0}^{-Q}{(-\theta)}}{\min_{j\in E}h_{j}^{-Q}{(-\theta)}}$, and $H(\theta)= \frac{h_{J_0}^{A}{(\theta)}}{h_{J_t}^{A}{(\theta)}} \frac{h_{J_0}^{-Q}{(\theta)}}{h_{J_t}^{-Q}{(\theta)}}$.

The tail of delay is expressed as
\begin{IEEEeqnarray}{rCl}
\IEEEeqnarraymulticol{3}{l}{
\Pr_{\bm i} ( D(t)>d) = \Pr_{\bm i}\qty{ { A(t-d) -Q^{+}(t) > 0 } }
} \\
&\le& \mathbb{E}_{\theta,\bm i} \qty[ H(\theta)  e^{ -\theta \hat{D}(t) + (t-d)\kappa^{A}(\theta) + t\kappa^{-Q}(\theta) }; \hat{D}(t) > 0 ] \IEEEeqnarraynumspace\\
&\le& H_{-} e^{ (t-d)\kappa^{A}(\theta) + t\kappa^{-Q}(\theta) },
\end{IEEEeqnarray}
and
\begin{IEEEeqnarray}{rCl}
\IEEEeqnarraymulticol{3}{l}{
\Pr_{\bm i} ( D(t)\le d) = \Pr_{\bm i}\qty{ { A(t-d) -Q^{+}(t) \le 0 } }
} \\
&\le& \mathbb{E}_{-\theta,\bm i} \qty[ H(-\theta)  e^{ \theta \hat{D}(t) + (t-d)\kappa^{A}(-\theta) + t\kappa^{-Q}(-\theta) }; \hat{D}(t) \le 0 ] \nonumber\\*
&\le& H_{+} e^{ (t-d)\kappa^{A}(-\theta) + t\kappa^{-Q}(-\theta) },
\end{IEEEeqnarray}
where $\hat{D}(t) = A(t-d) - Q^{+}(t)$, $H(\theta)= \frac{h_{J_0}^{A}{(\theta)}}{h_{J_{t-d}}^{A}{(\theta)}} \frac{h_{J_0}^{-Q}{(\theta)}}{h_{J_t}^{-Q}{(\theta)}}$, $H_{-} = \frac{h_{J_0}^{A}{(\theta)}}{\min_{j\in E} h_{j}^{A}{(\theta)}} \frac{h_{J_0}^{-Q}{(\theta)}}{\min_{j\in E}h_{j}^{-Q}{(\theta)}}$, and $H_{+} = \frac{h_{J_0}^{A}{(-\theta)}}{\min_{j\in E} h_{j}^{A}{(-\theta)}} \frac{h_{J_0}^{-Q}{(-\theta)}}{\min_{j\in E}h_{j}^{-Q}{(-\theta)}}$.

The distribution of throughput is expressed as
\begin{IEEEeqnarray}{rCl}
\IEEEeqnarraymulticol{3}{l}{
\Pr_{\bm i} \qty( A^{\ast}(t) \le x ) 
= \Pr_{\bm i} \qty{ \bigwedge \left\{ Q^{+}(t), A(t) \right\} \le x } 
} \\
&=& \Pr_{\bm i}( Q^{+}(t) \le x ) + \Pr_{\bm i}(A(t) \le x) \nonumber\\
&& - \Pr_{\bm i}( Q^{+}(t) \le x ) \Pr_{\bm i}(A(t) \le x) \\
&\le& \frac{h^{Q}_{J_0}(-\theta)}{\min\limits_{j\in E}h^{Q}_{j}(-\theta)} e^{\theta x +t\kappa^{Q}(-\theta)} 
 + \frac{h^{A}_{J_0}(-\theta)}{\min\limits_{j\in E}h^{A}_{j}(-\theta)} e^{\theta x +t\kappa^{A}(-\theta)} \nonumber\\
&-&  \qty( 1 - \frac{h^{Q}_{J_0}(\theta) { e^{-\theta x +t\kappa^{Q}(\theta)} } }{\min_{j\in E}h^{Q}_{j}(\theta)}  ) 
 \times \qty( 1 - \frac{h^{A}_{J_0}(\theta) { e^{-\theta x +t\kappa^{A}(\theta)} }}{\min_{j\in E}h^{A}_{j}(\theta)}  ). \nonumber\\*
\end{IEEEeqnarray}
Similarly, the throughput distribution is lower bounded by
\begin{IEEEeqnarray}{rCl}
\IEEEeqnarraymulticol{3}{l}{
\Pr_{\bm i} \qty( A^{\ast}(t) \le x ) 
} \nonumber\\
&\ge&  2 - \frac{h^{Q}_{J_0}(\theta)}{\min\limits_{j\in E}h^{Q}_{j}(\theta)} e^{-\theta x +t\kappa^{Q}(\theta)}  
  - \frac{h^{A}_{J_0}(\theta)}{\min\limits_{j\in E}h^{A}_{j}(\theta)} e^{-\theta x +t\kappa^{A}(\theta)}  \nonumber\\
&& - \frac{h^{Q}_{J_0}(-\theta) { e^{\theta x +t\kappa^{Q}(-\theta)} } }{\min_{j\in E}h^{Q}_{j}(-\theta)}  
 \times \frac{h^{A}_{J_0}(-\theta) { e^{\theta x +t\kappa^{A}(-\theta)} } }{\min_{j\in E}h^{A}_{j}(-\theta)}. 
\end{IEEEeqnarray}

\section{Proof of Corollary \ref{map-single-random-result}} \label{map-single-random-result-proof}

For the deterministic capacity case, the backlog process forms a Markov additive process.
We define the likelyhood ratio martingale as \cite{asmussen2003applied}, $\forall \theta\in \mathbb{R}$,
\begin{equation}
{L}(t) = \frac{h_{J_t}{(\theta)}}{h_{J_0}{(\theta)}}e^{\theta{B(t)}-t\kappa(\theta)}.
\end{equation}
Then, for $\theta>0$,
\begin{IEEEeqnarray}{rCl}
\Pr_{i} ( B(t) > x ) &=& \mathbb{E}_{\theta,i} \left[ \frac{h_{J_0}(\theta)}{h_{J_t}(\theta)} e^{-\theta B(t) + t\kappa(\theta)}; B(t)>x \right] \IEEEeqnarraynumspace\\
&\le& \frac{h_{J_0}(\theta)}{\min_{j\in E}h_{j}(\theta)} e^{-\theta x +t\kappa(\theta)},
\end{IEEEeqnarray}
and, for $\theta>0$, 
\begin{IEEEeqnarray}{rCl}
\Pr_{i} ( B(t) \le x ) &=& \mathbb{E}_{-\theta,i} \left[ \frac{h_{J_0}(-\theta)}{h_{J_t}(-\theta)} e^{\theta B(t) + t\kappa(-\theta)}; B(t) \le x \right] \nonumber\\*
&\le& \frac{h_{J_0}(-\theta)}{\min_{j\in E}h_{j}(-\theta)} e^{\theta x +t\kappa(-\theta)}.
\end{IEEEeqnarray}
The tail of delay is expressed as
\begin{IEEEeqnarray}{rCl}
\IEEEeqnarraymulticol{3}{l}{
\Pr_{i} ( D(t)>d) = \Pr_{i}\qty{ A(t-d) > \sum_{i=0}^{t}[Q(i)]^{+} }
} \\
&\le& \mathbb{E}_{\theta,i} \left[ H(\theta) e^{-\theta A(t-d) + (t-d)\kappa(\theta)}; A(t-d) > Q t \right] \IEEEeqnarraynumspace\\
&\le& \frac{h_{J_0}(\theta)}{\min_{j\in E} h_{j}(\theta)} e^{-\theta Qt + (t-d)\kappa(\theta)},
\end{IEEEeqnarray}
where $H(\theta) =\frac{h_{J_0}(\theta)}{h_{J_{t-d}}(\theta)}$.
Similarly, the distribution is bounded by
\begin{IEEEeqnarray}{rCl}
\IEEEeqnarraymulticol{3}{l}{
\Pr_{i} ( D(t) \le d) = \Pr_{i}\qty{ A(t-d) \le \sum_{i=0}^{t}[Q(i)]^{+} }
} \\
&\le& \mathbb{E}_{-\theta,i} \left[ H(-\theta) e^{\theta A(t-d) + (t-d)\kappa(-\theta)}; A(t-d) \le Q t \right] \IEEEeqnarraynumspace\\
&\le& \frac{h_{J_0}(-\theta)}{\min_{j\in E} h_{j}(-\theta)} e^{\theta Qt + (t-d)\kappa(-\theta)},
\end{IEEEeqnarray}
where $H(-\theta) =\frac{h_{J_0}(-\theta)}{h_{J_{t-d}}(-\theta)}$.

The proof of other results, for the random capacity case, follows analogically.

\end{document}